\documentclass[a4paper,UKenglish,cleveref, autoref, thm-restate]{lipics-v2021}
%This is a template for producing LIPIcs articles. 
%See lipics-v2021-authors-guidelines.pdf for further information.
%for A4 paper format use option "a4paper", for US-letter use option "letterpaper"
%for british hyphenation rules use option "UKenglish", for american hyphenation rules use option "USenglish"
%for section-numbered lemmas etc., use "numberwithinsect"
%for enabling cleveref support, use "cleveref"
%for enabling autoref support, use "autoref"
%for anonymousing the authors (e.g. for double-blind review), add "anonymous"
%for enabling thm-restate support, use "thm-restate"
%for enabling a two-column layout for the author/affilation part (only applicable for > 6 authors), use "authorcolumns"
%for producing a PDF according the PDF/A standard, add "pdfa"

\pdfoutput=1 %uncomment to ensure pdflatex processing (mandatatory e.g. to submit to arXiv)
\hideLIPIcs  %uncomment to remove references to LIPIcs series (logo, DOI, ...), e.g. when preparing a pre-final version to be uploaded to arXiv or another public repository

%\graphicspath{{./graphics/}}%helpful if your graphic files are in another directory

\nolinenumbers

\bibliographystyle{plainurl}% the mandatory bibstyle

\title{On the algorithmic structure of Dialectica realisers} %TODO Please add

%\titlerunning{Dummy short title} %TODO optional, please use if title is longer than one line

\author{Davide {Barbarossa}}{Department of Computer Science, University of Bath, UK %\and My second affiliation, Country 
\and \url{https://davidebarbarossa12.github.io/} }{db2437@bath.ac.uk}{https://orcid.org/0000-0003-4608-8282}{}%TODO mandatory, please use full name; only 1 author per \author macro; first two parameters are mandatory, other parameters can be empty. Please provide at least the name of the affiliation and the country. The full address is optional. Use additional curly braces to indicate the correct name splitting when the last name consists of multiple name parts.

\author{Thomas {Powell}}{Department of Computer Science, University of Bath, UK %\and My second affiliation, Country 
\and \url{https://t-powell.github.io/} }{trjp20@bath.ac.uk}{https://orcid.org/0000-0002-2541-4678}{}

\authorrunning{D. {Barbarossa} and T. {Powell}} %TODO mandatory. First: Use abbreviated first/middle names. Second (only in severe cases): Use first author plus 'et al.'

\Copyright{Davide Barbarossa and Thomas Powell} %TODO mandatory, please use full first names. LIPIcs license is "CC-BY";  http://creativecommons.org/licenses/by/3.0/

\ccsdesc[500]{Theory of computation~Proof theory}
\ccsdesc[300]{Theory of computation~Constructive mathematics}
\keywords{Dialectica interpretation, Hoare logic, Programs from proofs.} %TODO mandatory; please add comma-separated list of keywords

%\category{} %optional, e.g. invited paper

%\relatedversion{Full version hosted on arXiv.} %optional, e.g. full version hosted on arXiv, HAL, or other respository/website
\relatedversiondetails[]{Full Version}{https://arxiv.org/abs/2501.16208} %linktext and cite are optional

%\supplement{No new data were created during the study.}%optional, e.g. related research data, source code, ... hosted on a repository like zenodo, figshare, GitHub, ...
%\supplementdetails[linktext={opt. text shown instead of the URL}, cite=DBLP:books/mk/GrayR93, subcategory={Description, Subcategory}, swhid={Software Heritage Identifier}]{General Classification (e.g. Software, Dataset, Model, ...)}{URL to related version} %linktext, cite, and subcategory are optional

\funding{This work was funded by the Engineering and Physical Sciences Research Council, grant number EP/W035847/1. %For the purpose of Open Access the authors have applied a CC BY public copyright licence to any Author Accepted Manuscript version arising from this submission.
}
%optional, to capture a funding statement, which applies to all authors. Please enter author specific funding statements as fifth argument of the \author macro.

\acknowledgements{The authors thank Ulrich Berger (who first suggested looking at the frame rule) and Marie Kerjean (who among other things helped clarify several points from \cite{kerjean:pedrot:24:delta}).}%optional

%\nolinenumbers %uncomment to disable line numbering

%%%%%%%%%%%%%%%%%%%%%%%%%%%%%%%%%

\usepackage{cite}
\usepackage{amsmath,amssymb,amsfonts,amsthm}
\usepackage{graphicx}
\usepackage{textcomp}
\usepackage{xcolor}
\def\BibTeX{{\rm B\kern-.05em{\sc i\kern-.025em b}\kern-.08em
    T\kern-.1667em\lower.7ex\hbox{E}\kern-.125emX}}

\usepackage{enumerate}
\usepackage{url}
\usepackage{hyperref}
\usepackage{virginialake}
\usepackage{bm}

\usepackage{algorithm}
\usepackage{algpseudocode}

%%%%%%%%%%%%%%%%%%%%%%%%%%%%%%%%%

\newcommand{\eha}{\mathrm{E}\mbox{-}\mathrm{HA}^\omega}
\newcommand{\weha}{\mathrm{WE}\mbox{-}\mathrm{HA}^\omega}

\newcommand{\ac}{\mathrm{AC}}
\newcommand{\ip}{\mathrm{IP}^\omega_\forall}
\newcommand{\br}{\mathrm{BR}}
\newcommand{\markov}{\mathrm{M}^\omega}

\newcommand{\hr}[3]{{#1}\,{#2}\,{#3}}
\newcommand{\dpair}[2]{\langle{#1}\, | \, {#2}\rangle}
\newcommand{\hrd}[4]{\hr{#1}{\dpair{#2}{#3}}{#4}}

\newcommand{\ite}[3]{\mathtt{if}({#1},{#2},{#3})}

\newcommand{\ax}{\mathrm{Ax}}
\newcommand{\rul}{\mathrm{Rule}}

\newcommand{\sys}{\mathcal{H}}
\newcommand{\set}[1]{\{#1\}}
 % {#1} < #3, #4 > {#2}
\newcommand{\hoare}[3]{\set{#1}\, {#2} \,\set{#3}} % {#1} #2 {#3} CHANGED ORDER!

\newcommand{\tup}[1]{\bm{#1}}

\newcommand{\rec}{\mathtt{rec}}
\newcommand{\whilerec}[3]{\mathtt{whilerec}_{#1,#2,#3}}
\newcommand{\whilerecax}[1]{\mathtt{whilerec}_{#1}}
\renewcommand{\whiledo}[3]{\mathtt{while}^{+}_{#1}\, {#2}\, \mathtt{do}\, {#3}}
\newcommand{\whiledoback}[4]{\mathtt{while}^{-}_{#1}\, {#2}\, \mathtt{do}\, {(#3,#4)}}

\newcommand{\loopi}{\mathrm{LOOP}}
\newcommand{\loopd}{\mathrm{LOOP}_D}
\newcommand{\comf}{\mathtt{Comm}^{+}}
\newcommand{\comb}{\mathtt{Comm}^{-}}
\newcommand{\comm}{\mathtt{Comm}}
\newcommand{\rel}{\mathtt{Rel}}
\newcommand{\expr}{\mathtt{Exp}}
\newcommand{\loopite}[3]{\mathtt{if}\, {#1}\, \mathtt{then} \, {#2} \, \mathtt{else}\, {#3}}
\newcommand{\loopwhiledo}[3]{\mathtt{while}_{#1}\, {#2}\, \mathtt{do} \, {#3}}
\newcommand{\seq}[2]{{#1}\, ; \, {#2}}
\newcommand{\loopskip}{\mathtt{skip}}
\newcommand{\totalhoare}[3]{[{#1}]\, {#2}\, [{#3}]}

\newcommand{\backward}[6]{{#1},{#2},{#3}\,{\Downarrow^-}\, {#4},{#5},{#6}}
\newcommand{\forwards}[5]{{#1},{#2}\, {\Downarrow^+}\, {#3},{#4},{#5}}

\newcommand{\diff}[2]{D_{#2}({#1})}
\newcommand{\rdiff}[2]{D^\ast_{#2}({#1})}
\newcommand{\x}{\mathtt{x}}

\newcommand{\nat}{\mathsf{nat}}
\newcommand{\zero}{\mathsf{0}}
\newcommand{\suc}{\mathsf{suc}}

\newcommand{\dhl}{\mathrm{DHL}}

\newcommand{\ind}[1]{\mathrm{I}_{#1}}

\newcommand{\dt}[3]{|{#1}|^{#2}_{#3}}

\EventEditors{Stefano Guerrini and Barbara K\"{o}nig}
\EventNoEds{2}
\EventLongTitle{34th EACSL Annual Conference on Computer Science Logic (CSL 2026)}
\EventShortTitle{CSL 2026}
\EventAcronym{CSL}
\EventYear{2026}
\EventDate{February 23--28, 2026}
\EventLocation{Paris, France}
\EventLogo{}
\SeriesVolume{363}
\ArticleNo{48}

\begin{document}

\maketitle

%TODO mandatory: add short abstract of the document
\begin{abstract}
G\"odel's Dialectica interpretation is a fundamental tool for the extraction of computational content from proofs, and plays a central role in today's proof mining program. In the past decades, it has also been studied from the perspective of programming languages, and our contribution is in that direction. Specifically, we present Dialectica as a collection of rules in the style of Hoare logic, where Dialectica is now viewed as a language for specifying procedural programs that come with a forward and backward direction. This viewpoint captures the interesting dynamics of realisers extracted by the Dialectica interpretation, and we illustrate this by defining a generalised backpropagation semantics for a fragment of this language. We envisage this work as providing a base for several future developments, both theoretical and practical, which we outline at the end. 
\end{abstract}

\section{Introduction}
\label{sec:intro}
%%%%%%%%%%%%%%%%%%%%%%%%%%%%%%%%%%%%%%%%%%%%%%%%%%%%%%%%%%%%%%%%%%%%%%%%%%%%%%%%%%%%%%%%%%
%%%%%%%%%%%%%%%%%%%%%%%%%%%%%%%%%%%%%%%%%%%%%%%%%%%%%%%%%%%%%%%%%%%%%%%%%%%%%%%%%%%%%%%%%%

G\"odel's Dialectica interpretation \cite{goedel:58:dialectica} (which we will often refer to as just ``Dialectica'') is one of the most important methods for extracting computational content from proofs. Interest in this technique has grown rapidly in recent years, partially due to increased activity in two distinct strands of research.

The first is the proof mining program (an overview of which can be found in the monograph \cite{kohlenbach:08:book} or the survey papers \cite{kohlenbach:17:recent,kohlenbach:19:nonlinear:icm}), in which methods from proof theory are applied to obtain both quantitative and qualitative improvements of theorems in mainstream mathematics through the analysis of their proofs. Here, the Dialectica interpretation is used in particular to formulate so-called logical metatheorems, which involve extensions of Dialectica to sophisticated proof systems designed to explain the extractability of quantitative information in a specific setting (see e.g.~the classic works \cite{kohlenbach:05:metatheorems,kohlenbach-gerhardy:08:metatheorems}, or the most recent metatheorem in probability theory \cite{neri-pischke:25:formal}).
  %  This field is expanding at a rapid pace -- with over 30 case studies and associated metatheorems released in the last year alone\footnote{See the proof mining bibliography, maintained by Pischke, at this \href{https://sites.google.com/view/nicholaspischke/proof-mining-bibliography/chronological}{\emph{address}}.}.
 %   Research in this area relies primarily on insights from mainstream mathematics and traditional logic, as opposed to methods from computer science.
 %   This field is expanding at a rapid pace -- with over 30 case studies and associated metatheorems released in the last year alone\footnote{See the proof mining bibliography, maintained by Pischke, at this \href{https://sites.google.com/view/nicholaspischke/proof-mining-bibliography/chronological}{\emph{address}}.}.
    %\item 
    
    The second, somewhat orthogonal strand, is represented by a variety of works that approach Dialectica from a programming or categorical point of view, where the focus is on its \emph{structural} properties as a logical and program transformation, rather than on its concrete applications to mathematics.
    A crucial ingredient here is de Paiva's Dialectica categories and the discovery of the links between Dialectica and linear logic \cite{depaiva:91:thesis} (recent developments of the categorical viewpoint can be found in e.g.\ \cite{trotta:etal:pp:fibrations}).
   Most relevant to this paper is the recent reformulation of Dialectica as a genuine higher-order functional program transformation \cite{pedrot:14:functional}, which also led to the discovery of its links with automatic differentiation and the differential $\lambda$-calculus \cite{kerjean:pedrot:24:delta}.

This paper is a study of the Dialectica interpretation that can be related to both of these perspectives: We consider Dialectica in its traditional form as used in proof mining, but with a focus on the \emph{algorithmic structure} of the terms it extracts (referred to here as ``Dialectica realisers''). More specifically, we give an alternative presentation of the standard Dialectica interpretation, through a set of rules that that manipulate programs rather than logical formulas. The rules are set up in the style of Hoare logic \cite{hoare:69:logic}, and act on what we call \emph{Dialectica triples}, which are for now just realizing terms for implications between formulas (the characterising feature of Dialectica), but which we later connect with Hoare triples in the usual sense. Our rules formulate the standard soundness theorem for Dialectica by focusing on properties of extracted realizers, and in this way we seek to expose some of the elegant patterns and symmetries that govern programs extracted by the Dialectica.

We build on this perspective in two ways. First, we introduce a while loop for Dialectica into our term language, and show that it interprets a corresponding rule. We argue that this can be used to neatly describe iterative programs that arise from nonconstructive proofs in mathematics, where it is connected to the idea of interpreting classical proofs via \emph{learning} or \emph{backtracking}, a notion that dates back to Hilbert's substitution method \cite{ackermann:24:substitution}.

The fact that Dialectica admits a Hoare Logic presentation or, dually, that Hoare logic admits an extension as a logic for Dialectica realisers, raises the question of whether we can understand them from a procedural perspective. We take a first step in this direction in Section \ref{sec:imperative},
where we consider a restricted set of proof rules whose Dialectica realisers can be regarded as stateful in a suitable sense (though we hesitate to characterise them as truly imperative due to the lack of variable assignment for now, discussed further in Section~\ref{sec:conc}). We describe an operational semantics which demonstrates that our programs perform a generalised backpropagation algorithm, comprising a conventional forward part together with a backward pass that computes the reverse witness.

Our aim in all of this is to provide a perspective that can be properly developed in future work, and in that spirit, we conclude by discussing several concrete ideas, covering both theory and applications.

\subparagraph*{Our choice of formal system.}
We will generally privilege a standard presentation of the Dialectica, over more program theoretic ones as found in e.g.\ \cite{pedrot:14:functional}. In particular, we work in a higher-order version of Heyting arithmetic %type theory 
rather than a ``modern'' type theory. Extensions of the former have proven to be more than capable of capturing the application of Dialectica to a great variety of subsystems of ordinary mathematics: An instructive recent example is probability theory, which a-priori requires strong set-theoretic principles to carry out even the most basic tasks, but which is nevertheless amenable to program extraction via the Dialectica as set out in \cite{neri-pischke:25:formal} and as exemplified by the numerous case studies that have appeared in the last years. It is open whether alternative systems based on e.g.\ dependent type theory are better suited to the very precise task of capturing applications of Dialectica to \emph{tame} proofs in ordinary mathematics (see \cite{kohlenbach:20:tameness} for a discussion of ``proof theoretic tameness'' in this context), or can be as easily combined with logical relations such as majorizability \cite{howard:73:majorizability} that are essential for capturing uniformities in extracted programs. 
Finally, even though as shown in  \cite{pedrot:14:functional} these traditional formulations usually lack a crucial proof/program theoretic property, namely they break $\beta$-equivalence, here we embrace this fact at it gives rise to a natural \emph{if-then-else} case distinction construct which fits well with our procedural perspective. 

%%%%%%%%%%%%%%%%%%%%%%%%%%%%%%%%%%%%%%%%%%%%%%%%%%%%%%%%%%%%%%%%%%%%%%%%%%%%%%%%%%%%%%%%%%
%%%%%%%%%%%%%%%%%%%%%%%%%%%%%%%%%%%%%%%%%%%%%%%%%%%%%%%%%%%%%%%%%%%%%%%%%%%%%%%%%%%%%%%%%%
\section{Preliminaries}
\label{sec:prelim}
%%%%%%%%%%%%%%%%%%%%%%%%%%%%%%%%%%%%%%%%%%%%%%%%%%%%%%%%%%%%%%%%%%%%%%%%%%%%%%%%%%%%%%%%%%
%%%%%%%%%%%%%%%%%%%%%%%%%%%%%%%%%%%%%%%%%%%%%%%%%%%%%%%%%%%%%%%%%%%%%%%%%%%%%%%%%%%%%%%%%%

%In this section we outline the basic formal system that we will work in for the rest of the paper, and give a standard definition of the Dialectica interpretation within that system.

%%%%%%%%%%%%%%%%%%%%%%%%%%%%%%%%%%%%%%%%%%%%%%%%%%%%%%%%%%%%%%%%%%%%%%%%%%%%%%%%%%%%%%%%%%
\paragraph*{The base system $\weha$}
\label{sec:prelim:base}
%%%%%%%%%%%%%%%%%%%%%%%%%%%%%%%%%%%%%%%%%%%%%%%%%%%%%%%%%%%%%%%%%%%%%%%%%%%%%%%%%%%%%%%%%%

Essentially all applications of Dialectica to concrete proofs in mathematics %(as in the proof mining program) 
can be described formally in terms of some theory based on arithmetic in all finite types, and accordingly we take as our base theory the weakly extensional version $\weha$ of higher-order Heyting arithmetic, the higher-order theory of intuitionistic arithmetic with all finite types, whose underlying programming language of terms corresponds to System T.
This can be simultaneously used as a system for formalising mathematics, or as a logic for reasoning about higher-order programs, and we take both perspectives in this paper. 
Full details can be found in \cite[Chapter 3]{kohlenbach:08:book}, and the very brief overview here is merely used as an opportunity to fix notational conventions. The types of $\weha$ are the simple types 
\[
X,Y::=\nat\, | \, X\to Y
\]
where $\nat$ represents a base type of natural numbers. As in conventional in proof mining, we work with sequences of types in the metalanguage rather than explicitly introducing product types, which in particular allows us to employ more liberal abbreviations in the logical system. We use boldface $\tup{X}=X_1,\ldots, X_n$ to denote sequences of types, and from now on, when we say `type' we usually refer to a sequence. Similarly, we use boldface $\tup{a}=a_1,\ldots,a_n$ to denote sequences of terms, writing $\tup{a}:\tup{X}$ or $\tup{a}^{\tup{X}}$ to denote $a_1:X_1,\ldots,a_n:X_n$. 

The terms of $\weha$ are those of the simply typed $\lambda$-calculus plus constants $\zero:\nat$ and $\suc:\nat \to \nat$ for zero and successor (we use $x+1$ for $\suc\, x$) and, for each sequence $\tup{X}$ of types, a  constant $\tup{\rec}_{\tup{X}}:(\nat\to\tup{X}\to\tup{X})\to \tup{X}\to \nat\to\tup{X}$ for primitive recursion.
We include an explicit cases constructor $\ite{b^\nat}{\tup{s}}{\tup{t}}$ for all types, even though this is definable from the recursor. We make free use of a number of standard abbreviations around sequences of types and terms. If $\tup{X}=X_1,\ldots,X_n$ and $\tup{Y}=Y_1,\ldots,Y_m$ then $\tup{X}\to \tup{Y}$ denotes the sequence $(X_1\to\ldots\to X_n\to Y_j)^m_{j=1}$, Similarly, if $\tup{a}:\tup{X}\to \tup{Y}$ and $\tup{b}:\tup{X}$, then by $\tup{a}\tup{b}:\tup{Y}$ we mean $(a_jb_1\ldots b_n)^m_{j=1}$, and if $\tup{c}:\tup{Y}$ are terms and $\tup{x}:\tup{X}$ are variables, then by $\lambda\tup{x}.\tup{c}:\tup{X}\to \tup{Y}$ we mean$(\lambda x_1,\ldots,x_n.c_j)^m_{j=1}$. We write $\tup{b},\tup{c}:\tup{X},\tup{Y}$ for the concatenation of two sequences. %Finally, letting $[]$ denote the empty sequence of both terms and types, we use the conventions $[]\to\tup{X}=\tup{X}$, $\tup{X}\to []=[]$, $\tup{a}[]=\tup{a}$, $[]\tup{a}=[]$ and so on.

Formulas of $\weha$ are built from atomic formulas of the form $t=_\nat s$ for $t,s:\nat$, %$a^0=_0 b^0$, 
the usual logical connectives $\vee,\wedge,\to,\top,\bot$ (where we write $\neg A:=A\to \bot$), and quantifiers $\forall^X,\exists^X$ for each simple
type $X$ (which we usually omit). We also write $\exists \tup{x}\, A(\tup{x})$ for $\exists^{X_1} x_1,\ldots,\exists^{X_n} x_n\, A(x_1,\ldots,x_n)$ and similarly for $\forall$. Equality $=_X$ at higher types is defined in terms of $=_\nat$, where for $X=X_1 \to \ldots \to X_n\to \nat$ we set 
\[
t=_X s:=\forall x_1,\ldots,x_n(tx_1\ldots x_n=_\nat sx_1\ldots x_n).
\]
Equality for sequences is defined in the obvious way. The axioms and rules of $\weha$ are those of usual intuitionistic logic extended to all simple types, 
along with equality axioms for $=_\nat$, induction, and axioms for the arithmetical constants and terms of System T.
For instance, the conditional satisfies
\[
b=_\nat 0\to \ite{b}{\tup{s}}{\tup{t}}=\tup{s} \ \ \ \mbox{and} \ \ \ b\neq_{\nat} 0\to \ite{b}{\tup{s}}{\tup{t}}=\tup{t}
\]
and the axioms for the recursors are:
\[
\rec\,\tup{z}\tup{y} 0=\tup{y} \ \ \ \mbox{and} \ \ \ \rec\,\tup{z}\tup{y} (x+1)=\tup{z}x(\rec\,\tup{z}\tup{y}  x).
\]
For equality, we consider the weak extensionality rule\footnote{We stress that extensionality issues are not a concern for what we do: Soundness theorems for Dialectica are formulated using $\weha$ for the simple reason that the extensionality \emph{axiom} is not admissible. However, because our focus in on a descriptive system for realizers, we could equally well work in the fully extensional version of Heyting arithmetic $\eha$.}
\[
\frac{A_0\to a=_X b}{A_0\to B[a/x]\to B[b/x]}
\]
where $B$ is an arbitrary formula and $A_0$ is quantifier free. For full details of $\weha$ the reader is directed to \cite{kohlenbach:08:book}, though exact details at this level are unimportant for what follows.

%%%%%%%%%%%%%%%%%%%%%%%%%%%%%%%%%%%%%%%%%%%%%%%%%%%%%%%%%%%%%%%%%%%%%%%%%%%%%%%%%%%%%%%%%%
\paragraph*{The Dialectica interpretation}
\label{sec:prelim:dialectica}
%%%%%%%%%%%%%%%%%%%%%%%%%%%%%%%%%%%%%%%%%%%%%%%%%%%%%%%%%%%%%%%%%%%%%%%%%%%%%%%%%%%%%%%%%%

We give a brief overview of the Dialectica interpretation of \emph{formulas} (as opposed to \emph{proofs}). This is completely standard, and full details can be found in \cite[Chapter 8]{kohlenbach:08:book}. In simple terms, the Dialectica interpretation is a mapping $P\mapsto \exists\tup{x}\forall\tup{y}\dt{P}{\tup{x}}{\tup{y}}$ that transforms formulas from some source logic (originally Heyting arithmetic) to some higher-order functional language (originally System~T).
The transformed formula $\dt{P}{\tup{x}}{\tup{y}}$ contains two additional tuples of free variables, playing the role of witnesses $\tup{x}$ and counterwitnesses $\tup{y}$ for $P$, while the formula $\dt{P}{\tup{x}}{\tup{y}}$ can be viewed as an \emph{orthogonality relation} similarly to classical realisability.
The aim is to establish that whenever $P$ is provable in the source system, there exists terms $\tup{a}$ such that $\forall\tup{y}\dt{P}{\tup{a}}{\tup{y}}$ is provable in the target system. For a term $t:\nat$, we introduce a new abbreviation $\vee_t$ for the formula
\[
A\vee_t B:=(t=_\nat 0\to A)\wedge (t\neq_\nat 0\to B)
\]
Using the fact that any quantifier-free formula $\phi$ of $\weha$ can be represented by a ``characteristic term'' %function 
$\chi_\phi:\nat$ with the same free variables satisfying $\chi_\phi=0\leftrightarrow \phi$, we extend this connective to such formulas by defining $A\vee_\phi B:=A\vee_{\chi_\phi} B$.

\begin{definition}[Dialectica interpretation \cite{goedel:58:dialectica}, see also \cite{kohlenbach:08:book}]
\label{def:dial}
For a formula $P$ of $\weha$, we define its Dialectica interpretation to be the formula $\exists \tup{x}\forall\tup{y}\,\dt{P}{\tup{x}}{\tup{y}}$, whose free variables are the same as $P$, and where $\dt{P}{\tup{x}}{\tup{y}}$ is a quantifier-free formula, defined inductively along with the types of $\tup{x}$ and $\tup{y}$ as follows:\smallskip
\begin{itemize}

	\item $\dt{P}{}{}:=P$ if $P$ is atomic
	
	\item $\dt{P\wedge Q}{\tup{x},\tup{u}}{\tup{y},\tup{v}}:=\dt{P}{\tup{x}}{\tup{y}}\wedge \dt{Q}{\tup{u}}{\tup{v}}$
	
	\item $\dt{P\vee Q}{b,\tup{x},\tup{u}}{\tup{y},\tup{v}}:=\dt{P}{\tup{x}}{\tup{y}}\vee_b \dt{Q}{\tup{u}}{\tup{v}}$
	
	\item $\dt{P\to Q}{\tup{f},\tup{F}}{\tup{x},\tup{v}}:=\dt{P}{\tup{x}}{\tup{F}\tup{x}\tup{v}}\to \dt{Q}{\tup{f}\tup{x}}{\tup{v}}$
	
	\item $\dt{\exists x\, P(x)}{x,\tup{u}}{\tup{y}}:=\dt{P(x)}{\tup{u}}{\tup{y}}$
	
	\item $\dt{\forall x\, P(x)}{\tup{f}}{x,\tup{y}}:=\dt{P(x)}{\tup {f}x}{\tup{y}}$\smallskip

\end{itemize}
Note that if $\phi$ is quantifier-free then we can assume that it doesn't contain $\vee$ (as it can be rewritten as $\chi_\phi=0$) and therefore that $\dt{\phi}{}{}=\phi$.
\end{definition}
The characteristic feature of Dialectica, which sets it apart from realisability-like interpretations, is its treatment of implication, which comprises a forward-component $\tup{f}$ as in realisability along with a backward component $\tup{F}$, which together compose in a ``backpropagation'' style. The soundness theorem for the Dialectica interpretation (i.e. the Dialectica interpretation of \emph{proofs}), for extensions of $\weha$, usually takes the following general form:
\begin{theorem}[Generic soundness theorem]
\label{res:soundness:dial}
Suppose that $\Delta$ is a set of axioms whose Dialectica interpretation is witnessed by terms of $\weha_\Delta$ (provably in this system) for some suitable extension of $\weha$, and let $\mathcal{U}$ be a set of purely universal formulas. Then whenever
\[
\weha+\Delta+\mathcal{U}\vdash P
\] 
we can extract terms $\tup{a}$, whose free variables are the same as those of $P$, such that
\[
\weha_\Delta+\mathcal{U}\vdash \forall\tup{y}\dt{P}{\tup{a}}{\tup{y}}.
\]
\end{theorem}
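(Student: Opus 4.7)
The proof proceeds by induction on the length of the derivation of $P$ in $\weha+\Delta+\mathcal{U}$. For each axiom and inference rule of the system, I would exhibit explicit terms of $\weha_\Delta$ that witness the Dialectica interpretation of the conclusion in terms of the witnesses of the premises, and verify that the quantifier-free matrix is provable.

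First, I would dispense with the easy cases. Purely universal formulas $Q\in\mathcal{U}$ have the shape $\forall \tup{y}\,\phi(\tup{y})$ with $\phi$ quantifier-free, so by the last clauses of Definition~\ref{def:dial} their Dialectica interpretation reintroduces no existential witnesses and is literally $Q$ itself; thus they carry over unchanged. Axioms in $\Delta$ are handled directly by hypothesis. Equality axioms and the defining equations for recursors and the conditional $\ite{\cdot}{\cdot}{\cdot}$ are quantifier-free and self-interpreting. The purely logical axioms of intuitionistic predicate logic (identity, $K$, $S$ combinator style axioms for $\to$, projection for $\wedge$, injection and case for $\vee$, instantiation and generalisation for $\forall,\exists$) are each realised by canonical combinators; for instance, the axiom $A\to A$ is witnessed by $\lambda\tup{x}.\tup{x}$ on the forward part and $\lambda\tup{x}\tup{y}.\tup{y}$ on the backward part.

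The substantive work sits in a handful of rules. Modus ponens is realised by plugging the forward part of the implication's realiser into the realiser for the premise, and composing the backward parts in the characteristic Dialectica ``backpropagation'' style; the soundness check here amounts to an unfolding of the clause $\dt{P\to Q}{\tup{f},\tup{F}}{\tup{x},\tup{v}}$. The induction rule is interpreted by primitive recursion: a realiser for the step $\forall x(A(x)\to A(x+1))$ is iterated using $\rec_{\tup{X}}$ applied to the realiser of the base case $A(0)$ to yield a realiser for $\forall x\,A(x)$, with the backward witness similarly computed by recursion. The weak extensionality rule requires that the premise $A_0\to a=_X b$ be quantifier-free and thus realised trivially, after which one can substitute under arbitrary $B$ because its Dialectica matrix is itself quantifier-free and the substitution preserves provability.

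The main obstacle, and the reason the Dialectica decoration with $\vee_b$ (and hence the $\ite{\cdot}{\cdot}{\cdot}$ constructor) is essential, is the contraction rule $A\to A\to B \;\vdash\; A\to B$. Given a realiser $\tup{f},\tup{F}_1,\tup{F}_2$ for the premise, one must produce a single backward witness for $A\to B$ from the two candidate counterwitnesses $\tup{F}_1\tup{x}\tup{v}$ and $\tup{F}_2\tup{x}\tup{v}$. The standard solution is to branch on the decidable characteristic term $\chi_{\dt{A}{\tup{x}}{\tup{F}_1\tup{x}\tup{v}}}$ using the if-then-else, picking whichever counterwitness actually falsifies the matrix, and to verify the resulting quantifier-free implication by a case split. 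The same device handles the disjunction introduction/elimination rules, where the selector $b$ in $\dt{P\vee Q}{b,\ldots}{\ldots}$ is produced by a case distinction on decidable quantifier-free formulas. All remaining cases (weakening, permutation, $\wedge$- and $\vee$-rules, quantifier rules) are routine reshufflings of tuples, which I would verify schematically rather than case-by-case.
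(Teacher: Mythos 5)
Your proposal is correct and matches the standard argument: the paper does not prove Theorem~\ref{res:soundness:dial} itself but cites it as the classical soundness theorem from \cite[Chapter 8]{kohlenbach:08:book}, whose proof is exactly the induction on derivations you describe, with contraction (resolved via the decidability of quantifier-free matrices and the $\ite{\cdot}{\cdot}{\cdot}$ construct) as the crucial case, modus ponens handled by backward composition, and induction by the recursor. The same case analysis reappears, in triple form, in the paper's own proof of Theorem~\ref{res:soundness} in the appendix (compare your contraction case with the $(cond_R)$ rule there), so your approach is essentially the one the paper relies on.
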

In particular, we can set $\Delta=\mathcal{U}=\emptyset$ to obtain the Dialectica interpretation of $\weha$. A common formulation (cf. \cite[Theorem 8.6]{kohlenbach:08:book}) has $\Delta$ consisting of the axiom of choice ($\ac$), the independence of premise scheme for universal formulas ($\ip$), and Markov's principle ($\markov$) with $\weha_\Delta=\weha$. A further extension is to add the double negation shift to $\Delta$, and then $\weha_\Delta=\weha+(\br)$ i.e. Heyting arithmetic extended with bar recursion in all finite types (cf. \cite[Chapter 11]{kohlenbach:08:book}). 

%%%%%%%%%%%%%%%%%%%%%%%%%%%%%%%%%%%%%%%%%%%%%%%%%%%%%%%%%%%%%%%%%%%%%%%%%%%%%%%%%%%%%%%%%%
%%%%%%%%%%%%%%%%%%%%%%%%%%%%%%%%%%%%%%%%%%%%%%%%%%%%%%%%%%%%%%%%%%%%%%%%%%%%%%%%%%%%%%%%%%
\section{The action of Dialectica on programs}
\label{sec:dynamics}
%%%%%%%%%%%%%%%%%%%%%%%%%%%%%%%%%%%%%%%%%%%%%%%%%%%%%%%%%%%%%%%%%%%%%%%%%%%%%%%%%%%%%%%%%%
%%%%%%%%%%%%%%%%%%%%%%%%%%%%%%%%%%%%%%%%%%%%%%%%%%%%%%%%%%%%%%%%%%%%%%%%%%%%%%%%%%%%%%%%%%

We now present the soundness theorem for the Dialectica interpretation from a slightly different perspective, by providing a collection of specification rules for Dialectica realizers (terminology we use for programs extracted from proofs using Dialectica), all of which are sound in our base theory $\weha$. The intention here is to construct a rich, descriptive language where the focus is on the realizing terms rather than the underlying logic, one which can be easily extended with new programming constructs, or new rules for describing extracted programs of a specific type. We give several examples of this in what follows.

Our basic approach is inspired by Hoare logic \cite{hoare:69:logic}, in the sense that our rules apply to ``triples'' that describe properties of Dialectica realizers. Let us explain this in more detail. Viewed abstractly, a basic Hoare triple $\hoare{P}{c}{Q}$ describes the effect of some command $c$ in an imperative language on an underlying state. Here $P$ (the precondition) and $Q$ (the postcondition) are assertions about the state, and the meaning of the triple is that whenever the command executes on a state satisfying $P$, the resulting state satisfies $Q$. We can represent this basic idea in predicate logic by viewing the pre- and postconditions as formulas $P(s),Q(s)$ acting on objects $s:S$ for some state type $S$, and commands as functions $c:S\to S$. The Hoare triple then becomes the statement that $c$ satisfies
\[
\forall s(P(s)\to Q(cs)).
\]
If we now imagine that $P(s)$ and $Q(s)$ are quantifier-free formulas, and the state can be encoded in a suitable way in $\weha$, the formula $P(s)\to Q(cs)$ is exactly
\[
\dt{\exists x\, P(x)\to \exists x\, Q(x)}{c}{}.
\] 
With this loose correspondence in mind, we now generalise to formulas of arbitrary logical complexity and define a ``Dialectica triple'' as
\[
\hrd{P}{\tup{a}}{\tup{\alpha}}{Q}:=\forall \tup{x},\tup{v}\, \dt{P\to Q}{\tup{a},\tup{\alpha}}{\tup{x},\tup{v}}.
\]
where now $\tup{a}$ and $\tup{\alpha}$ are sequences of terms of some type determined by the formulas $P$ and $Q$. %The fact that Dialectica Hoare triples follow the above intuition will be make precise via the Soundness Theorem \ref{res:soundness}. 
In terms of actions on a hypothetical state $\tup{x}$, we visualise the pair $\dpair{\tup{a}}{\tup{\alpha}}$ as comprising two components: A forward ``command'' $\tup{a}$ satisfying
\[
\forall\tup{x}\left(\forall\tup{y}\dt{P}{\tup{x}}{\tup{y}}\to \forall\tup{v}\dt{Q}{\tup{a}\tup{x}}{\tup{v}}\right)
\]
followed by a backward command $\tup{\alpha}$, that operates on a ``dual state'' $\tup{v}$, satisfying
\[
\forall\tup{x},\tup{v}\left(\neg\dt{Q}{\tup{a}\tup{x}}{\tup{v}}\to \neg\dt{P}{\tup{x}}{\tup{\alpha}\tup{x}\tup{v}}\right)
\]
This approach differs from the use of Hoare triples for realizability in \cite{powell:24:hoare}, which is based on a monadic translation of proofs: Here Hoare triples are just the interpretation of implication, and a corresponding stateful interpretation in terms of backpropagation will be discussed in more detail in Section \ref{sec:imperative} below. 

As we will see, our use of Hoare logic \emph{in the abstract} allows us to view Dialectica realisers as `stateful' or \emph{procedural} programs. However, we are cautious in drawing too strong a connection: The fundamental power of Hoare logic and its various extensions lies in its ability to reason explicitly about the structure of the state, something that in connection to Dialectica we leave to future work. Therefore at present, we do not characterise Dialectica realisers as truly \emph{imperative}. However, we do believe that such view is possible and we think of this work as the first step towards it.

%%%%%%%%%%%%%%%%%%%%%%%%%%%%%%%%%%%%%%%%%%%%%%%%%%%%%%%%%%%%%%%%%%%%%%%%%%%%%%%%%%%%%%%%%%
\subsection{A Dialectica Hoare logic ($\dhl$)}
\label{sec:dynamics:rules}
%%%%%%%%%%%%%%%%%%%%%%%%%%%%%%%%%%%%%%%%%%%%%%%%%%%%%%%%%%%%%%%%%%%%%%%%%%%%%%%%%%%%%%%%%%

\begin{figure*}
\fbox{
\begin{minipage}{0.99\textwidth}
\scriptsize
$\begin{gathered}
\textbf{Propositional rules: Axioms, basic actions, conditionals, switching, composition.}
\\[2mm]
%AXIOMS
\hrd{\bot}{\tup{a}}{-}{P}
\qquad
\hrd{P}{-}{\tup{\alpha}}{\top}
\qquad
\hrd{P}{\lambda\tup{x}.\tup{x}}{\lambda\tup{x},\tup{v}.\tup{v}}{P}
\qquad
\vlinf{}{}
	{\hrd{P_\exists}{-}{-}{Q_\forall}}
	{P_\exists\to Q_\forall\in\ax}
\qquad
\vlinf{}{}
	{\hrd{P'_\exists}{-}{-}{Q'_\forall}}
	{\hrd{P_\exists}{-}{-}{Q_\forall}}
\text{\,  for $\frac{P_\exists\to Q_\forall}{P'_\exists\to Q'_\forall}\in\rul$}
\\[2mm]
%PERMUTATIONS
\vlinf{}{p\wedge_R}
	{\hrd{P}{\tup{b},\tup{a}}{\tilde{\tup{\alpha}}}{R\wedge Q}}
	{\hrd{P}{\tup{a},\tup{b}}{\tup{\alpha}}{Q\wedge R}}
\qquad
\vlinf{}{p\wedge_L}
	{\hrd{Q\wedge P}{\tilde{\tup{a}}}{\tilde{\tup{\beta}},\tilde{\tup{\alpha}}}{R}}
	{\hrd{P\wedge Q}{\tup{a}}{\tup{\alpha},\tup{\beta}}{R}}
\qquad
\vlinf{}{p\vee_R}
	{\hrd{P}{\tup{b},\tup{a}}{\tilde{\tup{\alpha}}}{R\vee_{\bar{c}} Q}}
	{\hrd{P}{\tup{a},\tup{b}}{\tup{\alpha}}{Q\vee_c R}}
\qquad
\vlinf{}{p\vee_L}
	{\hrd{Q\vee_{\bar{c}} P}{\tilde{\tup{a}}}{\tilde{\tup{\beta}},\tilde{\tup{\alpha}}}{R}}
	{\hrd{P\vee_c Q}{\tup{a}}{\tup{\alpha},\tup{\beta}}{R}}
\\[2mm]
%WEAKENINGS
\vlinf{}{\vee_R}
	{\hrd{P}{\tup{a},\tup{b}}{\tup{\alpha}_\pi}{Q\vee_0 R}}
	{\hrd{P}{\tup{a}}{\tup{\alpha}}{Q}}
\qquad
\vlinf{}{\wedge_L}
	{\hrd{P\wedge R}{\tup{a}_\pi}{\tup{\alpha}_\pi,\tup{\beta}}{Q}}
	{\hrd{P}{\tup{a}}{\tup{\alpha}}{Q}}
\qquad
\vlinf{}{\wedge_R}
	{\hrd{P}{\tup{a}}{\tup{\alpha}_{p}}{Q}}
	{\hrd{P}{\tup{a},\tup{b}}{\tup{\alpha}}{Q\wedge R}}
\qquad
\vlinf{}{\vee_L}
	{\hrd{P}{\tup{a}_p}{\tup{\alpha}_p}{Q}}
	{\hrd{P\vee_0 R}{\tup{a}}{\tup{\alpha},\tup{\beta}}{Q}}
\\[2mm]
%CONDITIONAL
\vliinf{}{cond_L}
	{\hrd{P\vee_\phi Q}{\lambda\tup{x},\tup{y}.\ite{\phi}{\tup{a}\tup{x}}{\tup{b}\tup{y}}}{\tup{\alpha}_\pi,\tup{\beta}_\pi}{R}}
	{\hrd{P\wedge\phi}{\tup{a}}{\tup{\alpha}}{R}}
	{\hrd{Q\wedge\neg\phi}{\tup{b}}{\tup{\beta}}{R}}
\qquad
\vliinf{}{cond_R}
	{\hrd{P}{\tup{a},\tup{b}}{\lambda\tup{x},\tup{v},\tup{w}.\ite{\dt{P}{\tup{x}}{\tup{\alpha}\tup{x}\tup{v}}}{\tup{\beta}\tup{x}\tup{w}}{\tup{\alpha}\tup{x}\tup{v}}}{Q\wedge R}}
	{\hrd{P}{\tup{a}}{\tup{\alpha}}{Q}}
	{\hrd{P}{\tup{b}}{\tup{\beta}}{R}}
\\[2mm]
%IMPORTATION, EXPORTATION, COMPOSITION
\vlinf{}{imp}
	{\hrd{P\wedge Q}{\tup{a}}{\tup{\alpha},\tup{b}}{R}}
	{\hrd{P}{\tup{a},\tup{b}}{\tup{\alpha}}{Q\to R}}
\qquad
\vlinf{}{exp}
	{\hrd{P}{\tup{a},\tup{\beta}}{\tup{\alpha}}{Q\to R}}
	{\hrd{P\wedge Q}{\tup{a}}{\tup{\alpha},\tup{\beta}}{R}}
\qquad
\vliinf{}{comp}
	{\hrd{P}{\tup{b}\circ\tup{a}}{\tup{\alpha}\ast_{\tup{a}}\tup{\beta}}{R}}
	{\hrd{P}{\tup{a}}{\tup{\alpha}}{Q}}
	{\hrd{Q}{\tup{b}}{\tup{\beta}}{R}}
\qquad
\\[2mm]
\textbf{Quantifier rules: Term introduction, $\lambda$-abstraction and application, epsilon terms.}
\\[2mm]
%NORMAL QUANTIFIERS
\vlinf{}{\exists_R}
	{\hrd{P}{\lambda\_.\tup{t},\tup{a}}{\tup{\alpha}}{\exists \tup{x}\, Q(\tup{x})}}
	{\hrd{P}{\tup{a}}{\tup{\alpha}}{Q(\tup{t})}}
\qquad
\vlinf{}{\forall_L}
	{\hrd{\forall \tup{x}\, P(\tup{x})}{\lambda\tup{f}.\tup{a}(\tup{f}\tup{t})}{\lambda\_.\tup{t},\lambda\tup{f}.\tup{\alpha}(\tup{f}\tup{t})}{Q}}
	{\hrd{P(\tup{t})}{\tup{a}}{\tup{\alpha}}{Q}}
\\[2mm]
\vlinf{}{\exists_L}
	{\hrd{\exists \tup{x}\, P(\tup{x})}{\lambda \tup{x}.\tup{a}}{\lambda \tup{x}.\tup{\alpha}}{Q}}
	{\hrd{P(\tup{x})}{\tup{a}}{\tup{\alpha}}{Q}}
\qquad
\vlinf{}{\forall_R}
	{\hrd{P}{\lambda\tup{y},\tup{x}.\tup{a}\tup{y}}{\lambda\tup{y},\tup{x}.\tup{\alpha}\tup{y}}{\forall\tup{x}\, Q(\tup{x})}}
	{\hrd{P}{\tup{a}}{\tup{\alpha}}{Q(\tup{x})}}
\qquad
\text{$\tup{x}$ not free in $Q$ resp. $P$ for $\exists_L$ resp. $\forall_R$}
\\[2mm]
\!\vlinf{}{s_L}
	{\hrd{P(\tup{t})}{\tup{a}\tup{t}}{\tup{\alpha}\tup{t}}{Q}}
	{\exists \tup{x}\, \hrd{P(\tup{x})}{\tup{a}}{\tup{\alpha}}{Q}}
\qquad
\!\vlinf{}{s_R}
	{\hrd{P}{\lambda\tup{y}.\tup{a}\tup{y}\tup{t}}{\lambda\tup{y},\tup{v}.\tup{\alpha}\tup{y}\tup{t}\tup{v}}{Q(\tup{t})}}
	{\hrd{P}{\tup{a}}{\tup{\alpha}}{\forall \tup{x}\, Q(\tup{x})}}
\qquad
\!\vlinf{}{\epsilon_R}
	{\hrd{P_\forall}{\tup{b}}{\tup{\alpha}}{Q(\tup{a})}}
	{\hrd{P_\forall}{\tup{a},\tup{b}}{\tup{\alpha}}{\exists \tup{x}\, Q(\tup{x})}}
\qquad
\!\!\vlinf{}{\!\epsilon_L}
	{\hrd{P_\forall(\tup{\alpha})}{-}{\tup{\beta}}{Q_{qf}}}
	{\hrd{\forall \tup{x}\, P_\forall(\tup{x})}{-}{\tup{\alpha},\tup{\beta}}{Q_{qf}}}
\\[2mm]
\textbf{Consequence, extensionality, and induction/recursion}
\\[2mm]
%OTHER RULES
\vliiinf{}{cons}
	{\hrd{P'}{\tup{a}}{\tup{\alpha}}{Q'}}
	{P'\to_D P}
	{\hrd{P}{\tup{a}}{\tup{\alpha}}{Q}}
	{Q\to_D Q'}
\qquad
\vliinf{}{ext}
	{\hrd{P}{\tup{b}}{\tup{\beta}}{Q}}
	{\hrd{P}{\tup{a}}{\tup{\alpha}}{Q}}
	{\tup{a},\tup{\alpha}=\tup{b},\tup{\beta}}
\qquad
\vlinf{}{ind}
	{\hrd{P(0)}{\rec\, {\tup{a}}}{\rec^\ast\tup{a}\tup{\alpha}}{\forall x\, P(x)}}
	{\hrd{P(x)}{\tup{a}(x)}{\tup{\alpha}(x)}{P(x+1)}}
\end{gathered}$
%}
%\end{align*}
\end{minipage}
}
\caption{Rules for Dialectica triples, definitions and abbreviations discussed in Section \ref{sec:dynamics:rules}.}
\label{fig:dynamics}
\end{figure*}

It is natural to ask what Hoare-like rules govern Dialectica realizers when viewed in this way. Fixing sets $\ax,\rul$ of axioms and rules in $\weha$, we define a judgment system $\dhl$ (for \emph{Dialectica Hoare Logic}) which derives judgments of shape $P\dpair{\tup{a}}{\tup{\alpha}}Q$, where $P,Q$ are formulas of $\weha$ and $\tup{a},\tup{\alpha}$ are (sequences of) terms of System T, according to the rules in Figure~1.
In all cases, the types of $\tup{a},\tup{\alpha}$ are left implicit, but can be inferred from the rules and formulas, as usual in Dialectica. These rules are described in detail in the points below, where we also clarify the notations and abbreviations used.
%In Figure 1 we set out a number of basic rules in the style of Hoare logic for describing the behaviour of Dialectica triples, which we describe fully below.

\begin{itemize}

	\item We write $P_\exists$ to denote a formula whose Dialectica interpretation is purely existential i.e. of the form $\exists\tup{x}\, \dt{P}{\tup{x}}{}$. Similarly, $P_\forall$ and $P_{qf}$ mean that the Dialectica interpretation of $P$ is purely universal resp.\ quantifier-free.
	
	\item The sets $\ax$ and $\rul$ denote sets of universal axioms and rules that we add to the system: At the very least $\ax$ would include the axioms and rules for $=_\nat$ along with those governing System T terms, while $\rul$ would include the quantifier-free extensionality rule, but we leave open the possibility that other axioms and rules could be included. %The axioms describe the ``meaning'' of empty realizers, in the sense that formulas of the form $P_\exists\to Q_\forall$ have no computational content under the Dialectica interpretation.
	
	\item For $c:\nat$ we define $\bar{c}$ by $\bar{0}:=1$ and $\overline{n+1}:=0$.

	%\item To improve readability, in most of the rules, we leave the types of the variables unspecified, with the understanding that these can be immediately inferred.
	
	\item $\tilde{\tup{\alpha}}$ denotes a permutation of the arguments of $\tup{\alpha}$, whose precise definition varies but can be immediately inferred from the rule. For example, written out fully, the $\tilde{\tup{\alpha}}$ in the conclusion of $(p\wedge_R)$ should be $\tilde{\tup{\alpha}}:=\lambda\tup{x},\tup{w},\tup{v}.\tup{\alpha}\tup{x}\tup{v}\tup{w}$
    %\begin{align*}
	%&\hrd{P}{\tup{b},\tup{a}}{\tilde{\tup{\alpha}}:=\lambda\tup{v}',\tup{v}.\tup{\alpha}\tup{v}\tup{v}'}{R\wedge Q}\\
	%&:=\forall\tup{x},\tup{v},\tup{v}'\left(\dt{P}{\tup{x}}{\tilde{\tup{\alpha}}\tup{v}\tup{v}'}\to \dt{R}{\tup{b}\tup{x}}{\tup{v}'}\wedge\dt{Q}{\tup{a}\tup{x}}{\tup{v}}\right)
    %\end{align*}
    and the intuitive meaning of the triple is the formula $\forall\tup{x},\tup{w},\tup{v}\left(\dt{P}{\tup{x}}{\tilde{\tup{\alpha}}\tup{x}\tup{w}\tup{v}}\to \dt{R}{\tup{b}\tup{x}}{\tup{w}}\wedge\dt{Q}{\tup{a}\tup{x}}{\tup{v}}\right)$.
	This further illustrates our philosophy on implicit typing: The aim of our notation is to suppress bureaucratic $\lambda$-terms wherever %possible, and whenever 
    these can be directly inferred.

	\item In a similar way, $\tup{\alpha}_\pi$ denotes a projection (e.g. in $(\vee_R)$, $\tup{\alpha}_\pi$ is shorthand for $\lambda\tup{x},\tup{y},\tup{v}.\tup{\alpha}\tup{x}\tup{v}$), while $\tup{\alpha}_p$ denotes a coprojection, by which we mean the instantiation of certain arguments with canonical zero terms $\tup{0}$ of the right type (e.g. in $(\wedge_R)$, $\tup{\alpha}_p$ is shorthand for $\lambda\tup{x},\tup{v}.\tup{\alpha}\tup{x}\tup{v}\tup{0}$).
	
	\item $\lambda\_.t$ denotes the constant term $\lambda\tup{x}.\tup{t}$, where the types of $\tup{x}$, not free in $t$, are to be inferred.
	
	\item For the conditional rules, $\phi$ is always quantifier-free.
	
	\item For composition, $\circ$ denotes the usual composition of functions, while $\ast$ denotes the special backwards composition for the Dialectica interpretation, with
	\[
	\tup{\alpha}\ast_{\tup{a}}\tup{\beta}:=\lambda\tup{x},\tup{w}.\tup{\alpha}\tup{x}\left(\tup{\beta}(\tup{a}\tup{x})\tup{w}\right)
	\]
	
	\item For the quantifier rules we write $Q(\tup{x})$ to denote all occurrences of the free variables $\tup{x}$ in $Q$, and note that there may be no occurrences.  We assume the usual rules for substitution e.g. in $(\exists_R)$ that $\tup{t}$ is free for $\tup{x}$ in $Q$, and similarly for the other rules.
	
	\item Rule $(cons)$ is the Dialectica version of the usual consequence rules, where the pre- and postconditions can be weakened/strengthened with no bearing on the program. More precisely,
    we define the $\weha$-formula $P'\to_D P$ by
	\[
	\forall\tup{x},\tup{v}\left(\dt{P'}{\tup{x}}{\tup{v}}\to \dt{P}{\tup{x}}{\tup{v}}\right)
	\]
	i.e. the witnessing types of $P'$ and $P$ coincide under the Dialectica and implication is interpreted by the identity. %This is the Dialectica version of the usual consequence rules, where the pre- and postconditions can be weakened/strengthened with no bearing on the program.
	
	\item Finally, $\rec$ denotes the usual recursor whose defining axiom is given in Section \ref{sec:prelim:base}, while the backward recursor $\rec^\ast$ is defined (using the main recursor) as $\rec^\ast\tup{a}\tup{\alpha}\tup{y}x\tup{v}:=\tup{\beta}x$ for
	\[
	\tup{\beta}0:=\tup{v} \ \ \ \mbox{and} \ \ \ 
	\tup{\beta}(z+1):=\tup{\alpha}(x-z-1)(\rec\, \tup{a}\tup{y}(x-z-1))(\tup{\beta}z).
	\]

\end{itemize}

%We conclude our presentation of $\dhl$ with a key result, namely that the system is sound in the following sense:

\begin{theorem}
\label{res:soundness}
Assuming that all elements of $\ax$ and $\rul$ are admissible in $\weha$, if $\hrd{P}{\tup{a}}{\tup{\alpha}}{Q}$ is derivable from the rules in Figure 1, then $\forall\tup{x},\tup{v}\left(\dt{P}{\tup{x}}{\tup{\alpha}\tup{x}\tup{v}}\to \dt{Q}{\tup{a}\tup{x}}{\tup{v}}\right)$ is provable in $\weha$.
\end{theorem}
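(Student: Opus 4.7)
The plan is to prove the theorem by structural induction on the derivation of $\hrd{P}{\tup{a}}{\tup{\alpha}}{Q}$ in $\dhl$, verifying for each rule that if $\forall\tup{x},\tup{v}(\dt{P'}{\tup{x}}{\tup{\alpha}'\tup{x}\tup{v}}\to \dt{Q'}{\tup{a}'\tup{x}}{\tup{v}})$ holds in $\weha$ for each premise, then the analogous statement holds for the conclusion. Conceptually this is a rearrangement of the standard soundness theorem for Dialectica (Theorem~\ref{res:soundness:dial} with $\Delta=\mathcal{U}=\emptyset$), in which each rule now manipulates realisers rather than formulas. So the bulk of the work has been done at the level of the underlying interpretation; it remains to check that the particular term constructors attached to each rule really do perform the required witness transformation.

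The axiom-style rules ($\bot$, $\top$, identity, and the rules and axioms coming from $\ax,\rul$) are immediate from the stated admissibility in $\weha$, together with the trivial facts that $\dt{\bot}{}{}$ is refutable and $\dt{\top}{}{}$ is provable. The propositional rules reduce, after unfolding $\dt{P\to Q}{\tup{a},\tup{\alpha}}{\tup{x},\tup{v}}=\dt{P}{\tup{x}}{\tup{\alpha}\tup{x}\tup{v}}\to \dt{Q}{\tup{a}\tup{x}}{\tup{v}}$, to propositional tautologies between quantifier-free formulas; the permutations $\tilde{\tup{\alpha}}$, projections $\tup{\alpha}_\pi$ and coprojections $\tup{\alpha}_p$ only shuffle or canonically instantiate arguments so that the residual implication becomes trivial. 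The conditional rules are handled by a case split on the characteristic term of the decidable $\phi$, using the defining axioms $b=_\nat 0\to\ite{b}{\tup{s}}{\tup{t}}=\tup{s}$ and its dual (with $cond_R$ instead splitting on the decidable formula $\dt{P}{\tup{x}}{\tup{\alpha}\tup{x}\tup{v}}$). The quantifier rules reduce after $\beta$-reduction and substitution to the premises, subject to the usual free-variable side conditions that prevent capture; the epsilon rules are analogous. Consequence and extensionality are direct from the definition of $\to_D$ and the (weak) extensionality rule respectively.

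The structurally richer cases are composition and induction. For $(comp)$, one unfolds $\tup{\alpha}\ast_{\tup{a}}\tup{\beta}=\lambda\tup{x},\tup{w}.\tup{\alpha}\tup{x}(\tup{\beta}(\tup{a}\tup{x})\tup{w})$ and chains the two premises, obtaining on the forward pass a witness $\tup{b}(\tup{a}\tup{x})$ for $R$ and on the backward pass a counterwitness that is propagated from $R$ through $Q$ back to $P$ — precisely the ``backpropagation'' dynamic emphasised in the surrounding text. The main obstacle is the induction rule $(ind)$, which I expect to be the most delicate case. Here one must verify that the pair $(\rec\,\tup{a},\,\rec^{\ast}\tup{a}\tup{\alpha})$ validates the Dialectica triple for $\forall x\, P(x)$. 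The strategy is a meta-induction in $\weha$ on $x$: the forward recursor produces, from an initial witness $\tup{y}$ for $P(0)$, the sequence of intermediate witnesses $\rec\,\tup{a}\tup{y}(x-z-1)$ along the trajectory, and the backward recursor, by definition, composes the instances $\tup{\alpha}(x-z-1)$ at exactly those intermediate states, starting from the counterwitness $\tup{v}$ and running down to $0$. Applying the premise $\hrd{P(x)}{\tup{a}(x)}{\tup{\alpha}(x)}{P(x+1)}$ at each step of the meta-induction closes the case and completes the proof.
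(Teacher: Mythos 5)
Your proposal is correct and follows essentially the same route as the paper's own proof: a structural induction over the derivation, unfolding the interpretation of implication for the propositional rules, a case split on the decidable formula $\dt{P}{\tup{x}}{\tup{\alpha}\tup{x}\tup{v}}$ for $(cond_R)$, the chaining of premises for $(comp)$, and an inner arithmetical induction for $(ind)$ showing that $\rec^\ast$ composes the maps $\tup{\alpha}$ at the intermediate witnesses produced by the forward recursor. The only cosmetic difference is that the paper phrases the $(ind)$ case as an induction on the backward step counter $z$ for fixed $x$ rather than on $x$ itself, but the argument is the same.
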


The proof of Theorem \ref{res:soundness} is routine, and involves no ideas fundamentally different to those already present in the usual soundness proofs of Dialectica for $\weha$. Some of the more interesting rules are discussed in the appendix. We note that by restricting the rules of $\dhl$ so that the precondition is always purely existential, we obtain a simplified set of rules that act only on left-hand realizers, and very closely reflect traditional Hoare rules. These are also presented in the appendix, in Figure 9.

%%%%%%%%%%%%%%%%%%%%%%%%%%%%%%%%%%%%%%%%%%%%%%%%%%%%%%%%%%%%%%%%%%%%%%%%%%%%%%%%%%%%%%%%%%
\paragraph*{Using $\dhl$}
\label{sec:dynamics:use}
%%%%%%%%%%%%%%%%%%%%%%%%%%%%%%%%%%%%%%%%%%%%%%%%%%%%%%%%%%%%%%%%%%%%%%%%%%%%%%%%%%%%%%%%%%

\begin{figure*}[tp]
\centering
\begin{align*}
\boxed{
\begin{gathered}
\vlinf{}{p\vee_L'}
	{\hrd{Q\vee P}{\lambda c.\tilde{\tup{a}}\bar{c}}{\lambda c.\tilde{\tup{\beta}}\bar{c},\lambda c.\tilde{\tup{\alpha}}\bar{c}}{R}}
	{\hrd{P\vee Q}{\tup{a}}{\tup{\alpha},\tup{\beta}}{R}}
\qquad
\vlinf{}{\vee_R'}
	{\hrd{P}{\lambda\tup{x}.0,\tup{a},\tup{b}}{\tup{\alpha}_\pi}{Q\vee R}}
	{\hrd{P}{\tup{a}}{\tup{\alpha}}{Q}}
\\[2mm]
\vliinf{}{cond_L'}
	{\hrd{P\vee Q}{\lambda c,\tup{x},\tup{y}.\ite{c=0}{\tup{a}\tup{x}}{\tup{b}\tup{y}}}{\alpha_\pi,\beta_\pi}{R}}
	{\hrd{P}{\tup{a}}{\tup{\alpha}}{R}}
	{\hrd{Q}{\tup{b}}{\tup{\beta}}{R}}
\end{gathered}
}
\end{align*}
\label{fig:disjunction}
\caption{Rules for disjunction, derivable from those in Figure 1, which can be added to $\dhl$.}
\end{figure*}

Our system $\dhl$ is, above all, intended to be a useful system for constructing and describing realizing terms. It is certainly not a minimal system: Indeed, because $(cons)$ ranges over all implications of form $\to_D$ derivable in $\weha$, it can be used together with the quantifier introduction rules to derive all of the others. However, a human (or a machine) using the system would typically use $(cons)$ only in in cases where $P'\to_D P$ and $Q\to_D Q'$ are immediate (e.g. have no computational content), and would resort to the main rules for constructing complex terms.

We view the system as fundamentally extendible: The rules in Figure 1 provide the basic manipulations on terms required to show that the Dialectica is sound for $\weha$, but in practice, especially if we envisage $\dhl$ being incorporated into some proof assistant as a way of reasoning about Dialectica realizers, we would add a range of derivable rules for dealing with specific mathematical structures in proofs, along with new rules for characterising additional constructs added to System T. We give several examples of these in what follows. As a simple illustration, in Figure 2 we note several rules involving disjunction that are easily derivable from those for $\vee_b$ along with $(cons)$ and the fact that $P\vee Q\leftrightarrow_D \exists b(P\vee_b Q)$. 

It is now natural to ask: \emph{What purely logical system is mirrored by $\dhl$?} This can be quite interesting, especially so when we consider extensions to our rules later on. We first note that if we remove $(\epsilon_L)$, $(\epsilon_R)$, $(cons)$, $(ext)$ from those of Figure 1, and then replace the disjunction rules with those of Figure 2, these rules can be adapted to form a logical system in the language of $\weha$ by just suppressing the realizing term. For example, we can simplify the rule $(cond_L')$ as
\[
\vliinf{}{cond_L'}
	{P\vee Q\vdash R}
	{P\vdash R}
	{Q\vdash R}
\] 
and similarly for the others. Letting $\sys$ denote the resulting system, we have the following:
\begin{theorem}
\label{res:equivalence}
Whenever $P$ is provable in $\weha$, then $\top\vdash P$ is provable in $\sys$.
\end{theorem}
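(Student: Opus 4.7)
The plan is to proceed by induction on the $\weha$-derivation of $P$, showing at each step that the corresponding sequent $\top \vdash P$ is derivable in $\sys$. Before beginning the induction I would assemble a short toolkit of admissible rules. First, structural weakening on the left comes directly from $(\wedge_L)$, exchange on the left from $(p\wedge_L)$, and left contraction from the identity axiom combined with $(cond_R)$ and $(comp)$. Second, an admissible \emph{modus ponens} rule: from $\top\vdash P$ and $\top\vdash P\to Q$, apply $(exp)$ to the second hypothesis to get $\top\wedge P\vdash Q$; combine the axiom $\top\vdash\top$ with $\top\vdash P$ via $(cond_R)$ to obtain $\top\vdash \top\wedge P$; then $(comp)$ gives $\top\vdash Q$. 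Third, an admissible generalisation rule, which is just $(\forall_R)$.

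With this toolkit the inductive step splits into three families. For the rules and axioms of intuitionistic first-order logic, each connective has matching introduction/elimination rules in $\sys$: conjunction by $(cond_R)$ and $(\wedge_R)$ (with $(p\wedge_R)$ giving the right projection); implication by $(exp)$ and $(imp)$; disjunction by the $(\vee_R')$ and $(cond_L')$ rules of Figure~2; and the quantifiers by $(\forall_L),(\forall_R),(\exists_L),(\exists_R)$ of Figure~1. Routine case analysis --- together with the structural toolkit --- then verifies every intuitionistic axiom scheme. For the non-logical arithmetical axioms of $\weha$ (equality axioms for $=_\nat$, the defining equations for $\zero,\suc,\rec$, the conditional, and the other System~T constants), the key observation is that each is purely universal, so by the standing assumption on $\ax$ these are admissible in $\weha$ and hence available via the axiom rule $\hrd{P_\exists}{-}{-}{Q_\forall}$ of $\sys$, to which one applies $(\forall_L)$ to instantiate as needed. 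The weak extensionality rule is covered by the $(\rul)$-rule. Induction is handled directly by $(ind)$: given $\top \vdash P(0)$ and $\top \vdash \forall x(P(x) \to P(x+1))$, one uses $(\forall_L)$ and identity to pass from the second hypothesis to $P(x) \to P(x+1)$ on the left, then $(exp)$ yields $P(x) \vdash P(x+1)$, $(ind)$ delivers $P(0) \vdash \forall x\, P(x)$, and a final $(comp)$ against $\top \vdash P(0)$ concludes.

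The main obstacle is not conceptual but bookkeeping: one must be attentive to the fact that the left-hand side of a sequent in $\sys$ behaves additively in some rules (e.g. $(\wedge_L)$) but more multiplicatively in others (e.g. $(p\wedge_L)$, which permutes arguments of the realiser rather than contracting them). Consequently the standard structural rule of contraction has to be recovered each time from identity and $(comp)$, which makes the propositional-logic cases slightly verbose. A further check is that, since $(cons)$ has been removed from $\sys$, one cannot reuse the derivation-via-$(cons)$ of Figure~2's disjunction rules; but these are now primitive in $\sys$, so there is no circularity, and one simply verifies that they faithfully encode intuitionistic disjunction. Once these checks are dispatched, the induction on the $\weha$-derivation goes through routinely, with every axiom and inference rule of $\weha$ having a direct counterpart in $\sys$.
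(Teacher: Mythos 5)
Your proposal is correct and follows essentially the same route as the paper's own proof: induction over the Hilbert-style axiomatisation of $\weha$, using the $(imp)$/$(exp)$ equivalence between $\top\vdash P\to Q$ and $P\vdash Q$, recovering modus ponens and the structural rules from $(comp)$ and the conditional rules, discharging the universal arithmetical axioms via $\ax$ and $\rul$, and simulating induction by $(ind)$. The paper's version is merely terser, citing $(s_L)$/$(s_R)$ for the quantifier axioms where you use the introduction rules, but the substance is the same.
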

Again, the proof is straightforward, and a rough sketch of how we show this 
%for the axiomatisation of $\weha$ given in \cite[Section 3]{kohlenbach:08:book} 
is provided in the appendix. Theorem \ref{res:equivalence} therefore represents an alternative soundness proof from the Dialectica interpretation of $\weha$, and another way of using $\dhl$: If $P$ is provable in $\weha$, we take an alternative derivation of $\top\vdash P$ in $\sys$ and adding back the hidden realizing terms we obtain a derivation of
\[
\hrd{\top}{\tup{a}}{-}{P}=\forall \tup{v}\dt{P}{\tup{a}}{\tup{v}}
\]
where $\tup{a}$ is constructed via the derivation (note that we can also force that the free variables of $\tup{a}$ are contained in those of $P$ by applying $(\exists_L)$ on all superfluous variables and then $(s_L)$ instantiated with zero terms). We also observe that if $P\to Q$ is any formula such that $P\leftrightarrow_D R\leftrightarrow_D Q$ for some suitable $R$, then $\hrd{P}{\lambda\tup{x}.\tup{x}}{\lambda\tup{x},\tup{v}.\tup{v}}{Q}$ is immediately derivable from the $(cons)$ rule and therefore $P\vdash Q$ can be added to $\sys$. In this way, we also regain that $\ac$, $\ip$ and $\markov$ are admissible by the Dialectica.

There are still three rules we have not discussed. While $(ext)$ is just an extensionality rule purely for reasoning about extracted terms, $(\epsilon_L)$ and $(\epsilon_R)$ are more interesting. These essentially allow us access to \emph{epsilon terms}: Informally speaking, we can add these rules to our purely logical system $\sys$ as
\[
\vlinf{}{\epsilon_R}
	{P_\forall\vdash Q(\tup{\epsilon})}
	{P_\forall\vdash \exists \tup{x}\, Q(\tup{x})}
\qquad
\vlinf{}{\epsilon_L}
	{P_\forall(\tup{\epsilon})\vdash Q_{qf}}
	{\forall \tup{x}\, P_\forall(\tup{x})\vdash Q_{qf}}
\]
where here we would add a countable set of epsilon terms $(\epsilon_i)$ to our language and use a fresh sequence of $\epsilon$-terms for each instance of $(\epsilon_L)$, $(\epsilon_R)$ in a derivation. The hidden realizing terms allow us to always assign concrete values to these $\epsilon$-terms, which can then be substituted in should any remain at the end of a derivation. However, some care is needed to set this up formally. Interestingly, the addition of $\epsilon$-terms to \emph{intuitionistic} logic is explored in \cite{viol:85:thesis}: There, doing this for arbitrary formulas does not give us a conservative extension of intuitionistic logic, as the full independence of premise axiom is then derivable\footnote{We are grateful to Cameron Allett for directing us to \cite{viol:85:thesis}.}. Our in-built restriction for $(\epsilon_R)$ that the left hand formula has purely universal Dialectica interpretation avoids this problem: We are then only able to derive independence of premise for \emph{universal} formulas, and this is admissible by Dialectica!

Finally, we note that by taking advantage of the infrastructure is already available via Oliva's unified approach to functional interpretations \cite{oliva:06:unifying}, it would not be hard to parametrise our own system $\dhl$ with abstract bounding relation $\sqsubset$ and thereby obtain both the traditional Dialectica, modified realizability and the Diller-Nahm interpretation \cite{diller:74:nahm} as instances. For the latter, the conditional rule $(cond_R)$ would then look like
\[
\vliinf{}{cond_R}
	{\hrd{P}{\tup{a},\tup{b}}{\lambda\tup{x},\tup{v},\tup{w}.\left(\tup{\alpha}\tup{x}\tup{v}\cup \tup{\beta}\tup{x}\tup{w}\right)}{Q\wedge R}}
	{\hrd{P}{\tup{a}}{\tup{\alpha}}{Q}}
	{\hrd{P}{\tup{b}}{\tup{\beta}}{R}}
\]
for $\cup$ a union operation on finite sets.

%%%%%%%%%%%%%%%%%%%%%%%%%%%%%%%%%%%%%%%%%%%%%%%%%%%%%%%%%%%%%%%%%%%%%%%%%%%%%%%%%%%%%%%%%%
%%%%%%%%%%%%%%%%%%%%%%%%%%%%%%%%%%%%%%%%%%%%%%%%%%%%%%%%%%%%%%%%%%%%%%%%%%%%%%%%%%%%%%%%%%
\subsection{A while rule for Dialectica, and its use in classical logic}
\label{sec:while}
%%%%%%%%%%%%%%%%%%%%%%%%%%%%%%%%%%%%%%%%%%%%%%%%%%%%%%%%%%%%%%%%%%%%%%%%%%%%%%%%%%%%%%%%%%
%%%%%%%%%%%%%%%%%%%%%%%%%%%%%%%%%%%%%%%%%%%%%%%%%%%%%%%%%%%%%%%%%%%%%%%%%%%%%%%%%%%%%%%%%%

We characterised $\dhl$ as being inspired by Hoare logic, but conspicuously absent so far are any kind of imperative control flow statements. To that end, we introduce a while construct and identify a corresponding rule that is sound in our base system. We argue that in the context of Dialectica, this while loop is well-suited to describing how programs extracted from classical logic implement ``learning algorithms''.

For the present paper, we restrict our attention to programs that are \emph{total}, so our while rule is specified relative to some wellfounded relation. We first introduce a decidable binary relation $\prec$ on objects of type $\tup{X}$, which can be modelled in $\weha$ as a function $\prec:\tup{X}\to\tup{X}\to \nat$. We then expand $\weha$ with a wellfounded induction rule $\ind{\prec}$ given by
\[
\vlinf{}{\ind{\prec}}
	{\forall \tup{x}\, A(\tup{x})}
	{\forall \tup{x}\left(\forall \tup{y}\prec \tup{x}\, A(\tup{y})\to A(\tup{x})\right)}
\]
where $A(\tup{x})$ ranges over arbitrary formulas. Now, for a quantifier-free formula $\phi(\tup{x}^{\tup{X}})$ and a term $\tup{a}:\tup{X}\to \tup{X}$, for any sequence of types $\tup{U}$ we add to the language of $\weha$ a while recursor $\whilerec{\prec}{\phi}{\tup{a}}:(\tup{X}\to \tup{U})\to(\tup{X}\to\tup{U}\to \tup{U})\to \tup{X}\to \tup{U}$ along with the schema
\[
(\phi(\tup{x})\to \tup{a}\tup{x}\prec\tup{x})\to\whilerec{\prec}{\phi}{\tup{a}}\tup{u}\tup{F}\tup{x}=_{\tup{U}}\ite{\phi(\tup{x})}{\tup{F}\tup{x}\left(\whilerec{\prec}{\phi}{\tup{a}}\tup{u}\tup{F}(\tup{a}\tup{x})\right)}{\tup{u}\tup{x}}
\]
where here the premise ensures that triggering the condition of the while loop causes a descent along $\prec$. We write $\weha+(\whilerecax{\prec})$ for $\weha$ extended with the while recursor and its axiom for all $\phi$ and $\tup{a}$. Now we define the forward while operator $\whiledo{\prec}{\phi}{\tup{a}}:\tup{X}\to\tup{X}$ as 
\[
\whiledo{\prec}{\phi}{\tup{a}}:=\whilerec{\prec}{\phi}{\tup{a}}(\lambda\tup{x}.\tup{x})(\lambda\tup{x},\tup{y}.\tup{y})
\]
and the backward operator $\whiledoback{\prec}{\phi}{\tup{a}}{\tup{\alpha}}:\tup{X}\to\tup{V}\to\tup{V}$ for $\tup{\alpha}:\tup{X}\to \tup{V}\to\tup{V}$ as
\[
\whiledoback{\prec}{\phi}{\tup{a}}{\tup{\alpha}}:=\whilerec{\prec}{\phi}{\tup{a}}(\lambda\tup{x},\tup{v}.\tup{v})(\lambda\tup{x},\tup{f},\tup{v}.\tup{\alpha}\tup{x}(\tup{f}\tup{v})).
\]
The following result, easily proven by expanding the definitions, guarantees that the forward while behaves indeed as a while loop. The backward direction is more subtle, and we discuss how it can be interpreted as part of a stateful program in Section \ref{sec:imperative}.
\begin{lemma}
\label{res:while:operator}
Whenever $\phi(\tup{x})\to \tup{a}\tup{x}\prec\tup{x}$ is provable in $\weha$, the following are provable in $\weha+(\whilerecax{\prec})$:
\begin{equation*}
\begin{aligned}
&(\whiledo{\prec}{\phi}{\tup{a}})\tup{x}=\ite{\phi(\tup{x})}{(\whiledo{\prec}{\phi}{\tup{a}})(\tup{a}\tup{x})}{\tup{x}},\\
&(\whiledoback{\prec}{\phi}{\tup{a}}{\tup{\alpha}})\tup{x}\tup{v}=\ite{\phi(\tup{x})}{\tup{\alpha}\tup{x}\left((\whiledoback{\prec}{\phi}{\tup{a}}{\tup{\alpha}})(\tup{a}\tup{x})\tup{v}\right)}{\tup{v}}.
\end{aligned}
\end{equation*}
\end{lemma}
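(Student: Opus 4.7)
The plan is to unfold the definitions of $\whiledo{\prec}{\phi}{\tup{a}}$ and $\whiledoback{\prec}{\phi}{\tup{a}}{\tup{\alpha}}$ in terms of $\whilerec{\prec}{\phi}{\tup{a}}$, apply the defining axiom of the latter (which is available because the hypothesis $\phi(\tup{x})\to \tup{a}\tup{x}\prec\tup{x}$ is exactly the premise of that axiom), and then carry out the resulting $\beta$-reductions to fold the definitions back in.

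For the forward direction, I would instantiate the defining schema with $\tup{u} := \lambda\tup{x}.\tup{x}$ and $\tup{F} := \lambda\tup{x},\tup{y}.\tup{y}$. Under the hypothesis, the axiom gives
\[
\whilerec{\prec}{\phi}{\tup{a}}(\lambda\tup{x}.\tup{x})(\lambda\tup{x},\tup{y}.\tup{y})\tup{x} = \ite{\phi(\tup{x})}{(\lambda\tup{x},\tup{y}.\tup{y})\tup{x}\bigl(\whilerec{\prec}{\phi}{\tup{a}}(\lambda\tup{x}.\tup{x})(\lambda\tup{x},\tup{y}.\tup{y})(\tup{a}\tup{x})\bigr)}{(\lambda\tup{x}.\tup{x})\tup{x}}.
\]
Two $\beta$-reductions collapse $(\lambda\tup{x},\tup{y}.\tup{y})\tup{x}(\cdot)$ to its second argument and $(\lambda\tup{x}.\tup{x})\tup{x}$ to $\tup{x}$, and folding back the definition of $\whiledo{\prec}{\phi}{\tup{a}}$ on both sides then yields the first equation.

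The backward direction is analogous, this time instantiating the schema with $\tup{u} := \lambda\tup{x},\tup{v}.\tup{v}$ and $\tup{F} := \lambda\tup{x},\tup{f},\tup{v}.\tup{\alpha}\tup{x}(\tup{f}\tup{v})$. A minor twist is that the schema now gives an equality at the higher type $\tup{V} \to \tup{V}$, so by the definition of $=_X$ at higher types (Section \ref{sec:prelim:base}) it suffices to prove the claim pointwise after applying both sides to a fresh argument $\tup{v}$. The relevant $\beta$-reductions are $(\lambda\tup{x},\tup{v}.\tup{v})\tup{x}\tup{v} = \tup{v}$ and $(\lambda\tup{x},\tup{f},\tup{v}.\tup{\alpha}\tup{x}(\tup{f}\tup{v}))\tup{x}\tup{g}\tup{v} = \tup{\alpha}\tup{x}(\tup{g}\tup{v})$, where $\tup{g}$ stands for the inner recursive occurrence that, by definition, equals $(\whiledoback{\prec}{\phi}{\tup{a}}{\tup{\alpha}})(\tup{a}\tup{x})$. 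Folding back this definition delivers the second equation.

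There is no genuine obstacle here; the only mechanical subtlety is that pushing the extra argument $\tup{v}$ through the outer $\ite{\cdot}{\cdot}{\cdot}$ in the backward case requires a small case distinction on whether $\phi(\tup{x})$ holds, justified by the two defining axioms for the conditional recorded in Section \ref{sec:prelim:base}. This is precisely why the authors describe the result as provable by merely expanding the definitions.
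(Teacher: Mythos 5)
Your proposal is correct and follows exactly the route the paper intends: the paper gives no explicit proof, describing the lemma as ``easily proven by expanding the definitions,'' and your unfolding of $\whiledo{\prec}{\phi}{\tup{a}}$ and $\whiledoback{\prec}{\phi}{\tup{a}}{\tup{\alpha}}$ via the defining schema of $\whilerec{\prec}{\phi}{\tup{a}}$, followed by $\beta$-reduction and refolding, is precisely that argument. Your handling of the backward case---reducing the higher-type equality to a pointwise one and pushing the argument $\tup{v}$ through the conditional by a case split on $\phi(\tup{x})$---correctly addresses the only points of care.
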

We now define a corresponding rule for any decidable relation $\prec$, namely
\begin{align*}
\begin{gathered}
\vliinf{}{W_\prec}
	{\hrd{\exists\tup{x}\, P_\forall(\tup{x})}{\whiledo{\prec}{\phi}{\tup{a}}}{\whiledoback{\prec}{\phi}{\tup{a}}{\tup{\alpha}}}{\exists \tup{x}\left(P_\forall(\tup{x})\wedge \neg\phi(\tup{x})\right)}}
	{\hrd{\exists\tup{x}\left(P_\forall(\tup{x})\wedge\phi(\tup{x})\right)}{\tup{a}}{\tup{\alpha}}{\exists\tup{x}\, P_\forall(\tup{x})}}
	{\forall\tup{x}\left(\phi(\tup{x})\to \tup{a}\tup{x}\prec\tup{x}\right)}
\end{gathered}
\end{align*}
The rule $W_\prec$ is closely connected to the wellfounded while rule of Hoare logic, with the difference that the descent condition is now presented separately. The following result (proven in the appendix) connects the rule to the recursor, and bears some relationship to the interpretation of wellfounded induction in \cite{schwichtenberg:08:wellfounded}.
\begin{theorem}
\label{res:while:sound}
The while rule $(W_\prec)$ is admissible in $\weha+\ind{\prec}+(\whilerecax{\prec})$.
\end{theorem}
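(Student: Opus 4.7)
The plan is to unfold both triples in $(W_\prec)$ via the Dialectica interpretation and then reduce, through Theorem \ref{res:soundness}, to a direct proof in $\weha+\ind{\prec}+(\whilerecax{\prec})$. Writing $P_\forall(\tup{x})$ as $\forall\tup{y}\,\phi_0(\tup{x},\tup{y})$ for the quantifier-free matrix of its Dialectica interpretation, the hypothesis triple unfolds to
\[
(H)\quad \forall\tup{x},\tup{v}\,\bigl(\phi_0(\tup{x},\tup{\alpha}\tup{x}\tup{v})\wedge\phi(\tup{x})\to\phi_0(\tup{a}\tup{x},\tup{v})\bigr),
\]
and the conclusion triple becomes
\[
(G)\quad \forall\tup{x},\tup{v}\,\bigl(\phi_0(\tup{x},(\whiledoback{\prec}{\phi}{\tup{a}}{\tup{\alpha}})\tup{x}\tup{v})\to\phi_0((\whiledo{\prec}{\phi}{\tup{a}})\tup{x},\tup{v})\wedge\neg\phi((\whiledo{\prec}{\phi}{\tup{a}})\tup{x})\bigr).
\]
So it suffices to derive $(G)$ from $(H)$ together with the descent premise $\forall\tup{x}(\phi(\tup{x})\to\tup{a}\tup{x}\prec\tup{x})$.

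I would proceed by $\prec$-induction on $\tup{x}$, i.e.\ by the rule $\ind{\prec}$, with a decidable case analysis on $\phi(\tup{x})$ inside the inductive step. The equations in Lemma \ref{res:while:operator}, available because the descent premise matches the side-condition of the $\whilerec{\prec}{\phi}{\tup{a}}$ axiom, are what let both loop operators be unfolded. If $\neg\phi(\tup{x})$, then both $\whiledo{\prec}{\phi}{\tup{a}}$ and $\whiledoback{\prec}{\phi}{\tup{a}}{\tup{\alpha}}$ reduce to the identity at $\tup{x}$ and $\tup{v}$, and $(G)$ collapses to the triviality $\phi_0(\tup{x},\tup{v})\to\phi_0(\tup{x},\tup{v})\wedge\neg\phi(\tup{x})$. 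If $\phi(\tup{x})$, the descent gives $\tup{a}\tup{x}\prec\tup{x}$, so the IH is applicable at $\tup{a}\tup{x}$; Lemma \ref{res:while:operator} then rewrites $(\whiledo{\prec}{\phi}{\tup{a}})\tup{x}$ to $(\whiledo{\prec}{\phi}{\tup{a}})(\tup{a}\tup{x})$ and $(\whiledoback{\prec}{\phi}{\tup{a}}{\tup{\alpha}})\tup{x}\tup{v}$ to $\tup{\alpha}\tup{x}\tup{w}$, where I set $\tup{w}:=(\whiledoback{\prec}{\phi}{\tup{a}}{\tup{\alpha}})(\tup{a}\tup{x})\tup{v}$.

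The only non-routine step is the chaining in this recursive case, which realises the ``backward pass'' of the loop. Given $\tup{v}$, the antecedent of $(G)$ at $(\tup{x},\tup{v})$ reads $\phi_0(\tup{x},\tup{\alpha}\tup{x}\tup{w})$; together with $\phi(\tup{x})$ and $(H)$ instantiated at $(\tup{x},\tup{w})$, this yields $\phi_0(\tup{a}\tup{x},\tup{w})$, which is exactly the antecedent of the IH at $(\tup{a}\tup{x},\tup{v})$. The IH then delivers $\phi_0((\whiledo{\prec}{\phi}{\tup{a}})(\tup{a}\tup{x}),\tup{v})\wedge\neg\phi((\whiledo{\prec}{\phi}{\tup{a}})(\tup{a}\tup{x}))$, which under the Lemma \ref{res:while:operator} equalities is precisely the consequent of $(G)$ at $(\tup{x},\tup{v})$.

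The main obstacle — really the only subtle point — is the order in which one composes: one must first descend all the way to the bottom of the recursion in order to \emph{name} the counter-witness $\tup{w}$, and only then propagate it back up by alternating $\tup{\alpha}$ with applications of $(H)$. This cascading pattern is not an artefact of the proof but is the generalised backpropagation that Section \ref{sec:imperative} turns into an operational semantics, so the correctness proof and the intended procedural reading of $(W_\prec)$ are two facets of the same construction.
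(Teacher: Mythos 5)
Your proposal is correct and follows essentially the same route as the paper's own proof: wellfounded induction via $\ind{\prec}$ on the statement $(G)$, a decidable case split on $\phi(\tup{x})$, unfolding both loop operators through Lemma \ref{res:while:operator}, and in the recursive case instantiating the hypothesis triple at the counter-witness $\tup{w}=(\whiledoback{\prec}{\phi}{\tup{a}}{\tup{\alpha}})(\tup{a}\tup{x})\tup{v}$ before chaining with the induction hypothesis at $\tup{a}\tup{x}$. The closing remarks about backpropagation are expository rather than a different argument, so nothing further is needed.
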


%%%%%%%%%%%%%%%%%%%%%%%%%%%%%%%%%%%%%%%%%%%%%%%%%%%%%%%%%%%%%%%%%%%%%%%%%%%%%%%%%%%%%%%%%%
\paragraph*{Discussion: The interpretation of classical logic and its interaction with $\whilerecax{\prec}$}
\label{sec:classical:interpreting}
%%%%%%%%%%%%%%%%%%%%%%%%%%%%%%%%%%%%%%%%%%%%%%%%%%%%%%%%%%%%%%%%%%%%%%%%%%%%%%%%%%%%%%%%%%

\begin{figure*}[tp]
\centering
\begin{align*}
\boxed{\small
\begin{gathered}
\vlinf{}{CP}
	{\hrd{\neg Q}{\tilde{\tup{\alpha}}}{\tilde{\tup{a}}}{\neg P}}
	{\hrd{P}{\tup{a}}{\tup{\alpha}}{\neg\neg Q}}
\qquad
%%\vlinf{}{CP'}
%	{\hrd{Q}{\tilde{\tup{\alpha}}}{\tilde{\tup{a}}}{\neg\neg P}}
%	{\hrd{\neg P}{\tup{a}}{\tup{\alpha}}{\neg Q}}
%\qquad
\vlinf{}{N}
	{\hrd{P}{\tup{a}}{\tup{\alpha}}{\neg\neg Q}}
	{\hrd{P}{\tup{a}}{\tup{\alpha}}{\forall \tup{g}\exists\tup{u}\, \dt{Q}{\tup{u}}{\tup{g}\tup{u}}}}
%\vlinf{}{N}
%	{\hrd{P}{\tup{a}}{\tup{\alpha}}{\neg\neg \exists\tup{u}\forall\tup{v}\, Q_{qf}(\tup{u},\tup{v})}}
%	{\hrd{P}{\tup{a}}{\tup{\alpha}}{\forall \tup{g}\exists\tup{u}\, Q_{qf}(\tup{u},\tup{g}\tup{u})}}
\qquad
\vliinf{}{comp_\neg}
	{\hrd{P}{\lambda\tup{x}.\tup{b}(\tup{a}\tup{x}\tup{\beta})}{\lambda\tup{x}.\tup{\alpha}\tup{x}\tup{\beta}}{R_\exists}}
	{\hrd{P}{\tup{a}}{\tup{\alpha}}{\neg\neg Q}}
	{\hrd{Q}{\tup{b}}{\tup{\beta}}{R_\exists}}
\end{gathered}
}
\end{align*}
\caption{Example rules for manipulating classical proofs under the double negation translation.}
\label{fig:doubleneg}
\end{figure*}

It is well known that classical logic can be given a computational interpretation via the Dialectica by first carrying out a negative translation (see \cite[Chapter 10]{kohlenbach:08:book}), where the embedding of classical logic into intuitionistic makes use of a number of (semi-intuitionistic) laws governing negations. Expanding $\dhl$ with these laws as primitives would allow for a streamlined verification of programs extracted from classical proofs, and though detailed exploration of this is beyond the scope of the present paper, we give three examples of the kind of rules we have in mind, in Figure \ref{fig:doubleneg}. 

Here, $(CP)$ and $(N)$ are extremely useful (and reversible) symmetry rules that one often encounters when analysing classical proofs, while $(comp_\neg)$ is a version of composition which illustrates one way in which the negative translation composed with Dialectica eliminates nonconstructive lemmas when proving purely existential formulas, as we discuss below.

\begin{figure*}
\fbox{
\begin{minipage}{0.99\textwidth}
\scriptsize
$\begin{gathered}
\!\!\vlderivation{
\vlin{}{cons}
	{\hrd{\forall y(\forall z\prec y \neg\theta(z)\to\neg\theta(y))}{-}{\lambda x,g.a_gx}{\forall x\, \neg\theta(x)}}
	{\vlin{}{CP}
		{\hrd{\neg\exists y( \theta(y)\wedge \forall z\prec y \neg\theta(z))}{-}{\lambda x,g.a_gx}{\neg\exists x\, \theta(x)}}
		{
		\vlin{}{N}
			{\hrd{\exists x\, \theta(x)}{\lambda x,g.a_gx}{-}{\neg\neg\exists y( \theta(y)\wedge \forall z\prec y \neg\theta(z))}}
			{
			\vlin{}{\forall_R}
				{\hrd{\exists x\, \theta(x)}{\lambda x,g.a_gx}{-}{\forall g\exists y( \theta(y)\wedge \neg\phi_g(y))}}
				{
				\vlin{}{W_\prec}
			{\hrd{\exists x\, \theta(x)}{\whiledo{\prec}{\phi_g}{g}}{-}{\exists y( \theta(y)\wedge \neg\phi_g(y))}}
			{\vlhy{\vdots}}
				}
			}
		}
	}
}
\,\,
\vlderivation{
\vlin{}{\forall_R}
	{\hrd{\top}{\lambda v.b(a(fv)\beta)}{-}{\forall v\exists u\, \theta(v,u)}}
	{\vliin{}{\!\!comp_\neg}
		{\hrd{\top}{b(a(fv)\beta)}{-}{\exists u\, \theta(v,u)}}
		{
		\vlin{}{s_R}
			{\hrd{\top}{a(fv)}{-}{\neg\neg Q_v}}
			{
			\vlin{}{}
				{\hrd{\top}{a}{-}{\forall z\neg\neg \exists x\forall y \varphi(z,x,y)}}
				{\vlhy{\vdots}}
			}
		}
		{
		\vlin{}{}
			{\hrd{Q_v}{b}{\beta}{\exists u\, \theta(v,u)}}
			{\vlhy{\vdots}}
		}
	}
}
\end{gathered}$
\end{minipage}
}
\caption{Left: For $g:X\to X$ we use the abbreviation $\phi_g(x):= gx\prec x\wedge \theta(gx)$ and $a_g:=\whiledo{\prec}{\phi_g}{g}$. One completes the derivation by using $\hrd{\exists x(\theta(x)\wedge \phi_g(x))}{g}{-}{\exists y\, \theta(y)}$ and $\forall x(\phi_g(x)\to gx\prec x)$, which are easily provable. Right: we use the negative translation to obtain a constructive proof of the double negated statement that appears in the left branch.}
\label{fig:classic}
\end{figure*}

An example of how our while rule in particular interacts with these new principles for handling double negation is given on the left side of Figure \ref{fig:classic}, where we use it to define a realiser for double-negated version of the following \emph{minimum principle}:
\[
\exists x\, \theta(x)\vdash \exists x\left(\theta(x)\wedge\forall y\prec x\neg\theta(y)\right),
\]
where $\theta(x^X)$ is a quantifier-free formula and $\prec$ is a wellfounded relation on $X$ (we omit simple instances of $(ext)$ to improve readability). This represents a derivation, in a Hoare style system, of the kind of intuitive algorithm for the minimum principle already described in e.g. \cite[Section 7]{berardi-bezem-coquand:98:bbc}, and the imperative flavour of such constructions in relation to the Dialectica interpretation is also discussed in \cite{powell:16:learning}, where an explicit connection to learning algorithms is drawn. This general phenomenon of computing approximations to ideal objects via learning is present in almost all approaches to giving a computational meaning to classical reasoning: the epsilon calculus, modified realizability \cite{berger-schwichtenber:95:classical}, game semantics \cite{berardi-bezem-coquand:98:bbc}, approaches based on learning \cite{aschieri-berardi:10:interative}, the $\lambda\mu$-calculus \cite{parigot:92:lambdamu}, call-cc \cite{griffin:90:control}, classical realizability \cite{krivine:09:realizability, miquel:11:realizability} and others (see \cite{powell:19:epsilon} for a comparative study of some of these), and we propose that our approach via Hoare logic might help make explicit this intuition in the context of Dialectica.

Another example where we see our rules for classical logic in action is given by the frequently occurring proof pattern where an instance of a nonconstructive principle is eliminated by the Dialectica when used as a lemma in the proof of a $\forall\exists$-theorem. More precisely, suppose that we establish $\forall v\exists u\, \theta(v,u)$ by showing (constructively) that for any $v$ we have
$
\exists x\forall y\, \varphi(fv,x,y)\to \exists u\, \theta(v,u),
$
for some function $f$, but where the premise is an instance of a principle only provable nonconstructively; for instance, $x$ might be a minimal element in some set, a maximal ideal, or a point after which a sequence is $2^{-fv}$ stable. Abbreviating $\exists x\forall y\, \varphi(fv,x,y)$ with $Q_v$, the derivation on the right in Figure \ref{fig:classic} provides a basic template for eliminating this nonconstructive principle via $(comp_\neg)$, combining a witness $a$ for the \emph{classical} Dialectica interpretation of the latter with a realizing pair for the \emph{constructive} implication above. There are many cases where the witness $a$ for the double negated lemma could be built in a natural way via using our while loop, against some ``counterexample function'' provided by the backward witness $\beta$. It might then form a core component of a procedural-style program for computing a witness for $\forall v\exists u\, \theta(v,u)$.

%%%%%%%%%%%%%%%%%%%%%%%%%%%%%%%%%%%%%%%%%%%%%%%%%%%%%%%%%%%%%%%%%%%%%%%%%%%%%%%%%%%%%%%%%%
%%%%%%%%%%%%%%%%%%%%%%%%%%%%%%%%%%%%%%%%%%%%%%%%%%%%%%%%%%%%%%%%%%%%%%%%%%%%%%%%%%%%%%%%%%
\section{A procedural language for Dialectica}
\label{sec:imperative}
%%%%%%%%%%%%%%%%%%%%%%%%%%%%%%%%%%%%%%%%%%%%%%%%%%%%%%%%%%%%%%%%%%%%%%%%%%%%%%%%%%%%%%%%%%
%%%%%%%%%%%%%%%%%%%%%%%%%%%%%%%%%%%%%%%%%%%%%%%%%%%%%%%%%%%%%%%%%%%%%%%%%%%%%%%%%%%%%%%%%%

So far, we have put forward the argument that having a rich library of rules for Dialectica triples, including imperative constructs like the while loop, is beneficial for describing programs naturally arising from mathematics in a procedural way. 
In this section, we consider this aspect of our system in a more formal manner. For now we consider a restricted system that aligns closely with traditional Hoare logic, comprising a small set of rules for generating ``stateful'' programs. Because these rules arise from the Dialectica interpretation, the result is a novel procedural language with nonstandard interpretation. To be more precise, we define a language $\loopd$, representing a Dialectica-like extension of a toy procedural language $\loopi$, and provide both a Hoare logic and an operational semantics, where the latter captures the idea that
\[
\loopd = \loopi + \text{backpropagation},
\]
i.e.\ commands of the language consist of an ordinary forward part, together with a backward component that can be computed via a generalised form of \emph{backpropagation}. We anticipate that this operational semantics that could eventually be extended to our entire system $\dhl$.

In order to define $\loopd$ and its corresponding logic, we extend our ambient theory $\weha$ with two new \emph{abstract types} $S$ and $T$ representing two sorts of \emph{state}. The extension of $\weha$ with abstract types is standard in the proof mining literature (see \cite{kohlenbach:05:metatheorems}), where those types are typically used to represent abstract spaces from mathematics.
We then extend the terms $\weha$ by introducing: 

\begin{itemize}

\item two equality symbols $=_S$ and $=_T$, which we write as predicates but which will technically be terms $=_S:S\to S\to \nat$ and $=_T:T\to T\to \nat$ (indicating that equality between states is a decidable property). In addition, we extend all the usual constructs of $\weha$ e.g. quantifiers, recursors, lambda abstraction, so that they apply to our new states;

\item two sets of primitive command, the set $\comf$ of \emph{forward commands} of type $S\to S$, and the set $\comb$ of \emph{backward commands} of type $S\to T\to T$;

\item a set $\expr$ of boolean expressions, which are formally a set of decidable predicates on $S$ represented in $\weha$ as characteristic functions of type $S\to S\to \nat$;

\item a set $\rel$ of wellfounded relations on $S$.

\end{itemize}

We also assume that we have a set of purely universal axioms $\mathcal{U}$ that characterise the meaning of commands and expressions. We then define the set $\comm$ of \emph{commands} of $\loopd$ as follows, all of which represent new constants or term forming operations in $\weha$ for generating pairs of terms of types $S\to S,S\to T\to T$:
\[
C::= \loopskip \mid \dpair{c}{\gamma} \mid \seq{C}{C} \mid \loopite{\phi}{C}{C} \mid \loopwhiledo{\prec}{\phi}{C} 
\]
where $c\in\comf$, $\gamma\in\comb$, $\phi\in\expr$ and $\prec\in\rel$.
\begin{figure*}[t]
\small\centering
\[\begin{array}{|c|c|c|}
\hline
          \loopd     & (\_)^+       & (\_)^- \\
              \hline
    \loopskip & \lambda s.s & \lambda s,t.t \\[0.7mm]
    \dpair{c}{\gamma} & c & \gamma \\[0.7mm]
    \seq{C_1}{C_2} & C_2^+\circ C_1^+ & C_1^-\ast_{C_1^+} C_2^- \\[0.7mm]
    \loopite{\phi}{C_1}{C_2} & \lambda s.\ite{\phi(s)}{C^+_1s}{C^+_2s} & \lambda s,t.\ite{\phi(s)}{C^-_1st}{C^-_2st}\\[0.7mm]
    \loopwhiledo{\prec}{\phi}{C} & \whiledo{\prec}{\phi}{C^+} & \whiledoback{\prec}{\phi}{C^+}{C^-}\\
		\hline
\end{array}\]
\caption{Decomposition in $\weha$ of $\loopd$ terms in System T terms. Remember the Dialectica composition $\ast$ from Section~\ref{sec:dynamics} and the while constructors from Section \ref{sec:while}.}
\label{fig:LOOPD}
\end{figure*}
The new constructors come equipped with the defining axiom
\[
C=\dpair{C^+}{C^-},
\]
where $\dpair{}{}$ is a fixed representation of pairs in $\weha$ and $C^+:S\to S$ and $C^-:S\to T\to T$ are defined inductively over $\comm$ as in Figure~\ref{fig:LOOPD}. In particular, notice that by Lemma \ref{res:while:operator} we have that if $\neg \phi(s)$, then $(\loopwhiledo{\prec}{\phi}{C_1})s=(\loopskip) s$, and whenever $\phi(s)$ and $C^+s\prec s$, then
\[
(\loopwhiledo{\prec}{\phi}{C})s=(\seq{C}{\loopwhiledo{\prec}{\phi}{C}})s.
\]
We denote by $\weha+(\loopd)$ the extension of $\weha$ with all of the above (including the while recursor and well-founded induction axiom for all $\prec\in \rel$). 

\begin{figure*}
\fbox{
\begin{minipage}{0.99\textwidth}
\small\centering
$\begin{gathered}
\totalhoare{P}{\loopskip}{P}
\qquad
\vlinf{}{}
	{\totalhoare{P}{\dpair{c}{\gamma}}{Q}}
	{P(s,\gamma st)\to Q(cs,t)\in \ax}
\qquad
\vliiinf{}{}
	{\totalhoare{P'}{C}{Q'}}
	{P'\to P}
	{\totalhoare{P}{C}{Q}}
	{Q\to  Q'}
%\vliinf{}{}
%	{\totalhoare{P\vee Q}{C}{R}}
%	{\totalhoare{P}{C}{R}}
%	{\totalhoare{Q}{C}{R}}
%\qquad
%\vliinf{}{}
%	{\totalhoare{P}{C}{Q\wedge R}}
%	{\totalhoare{P}{C}{Q}}
%	{\totalhoare{P}{C}{R}}
\\[2mm]
\,
\vliinf{}{}
	{\totalhoare{P}{\seq{C_1}{C_2}}{R}}
	{\totalhoare{P}{C_1}{Q}}
	{\totalhoare{Q}{C_2}{R}}
\qquad
\vliinf{}{}
	{\totalhoare{P\vee_\phi Q}{\loopite{\phi}{C_1}{C_2}}{R}}
	{\totalhoare{P\wedge\phi}{C_1}{R}}
	{\totalhoare{Q\wedge\neg\phi}{C_2}{R}}
\qquad
\vliinf{}{}
	{\totalhoare{P}{\loopwhiledo{\prec}{\phi}{C}}{P\wedge\neg\phi}}
	{\totalhoare{P\wedge\phi}{C}{P}}
	{\phi(s)\to C^+s\prec s}
\end{gathered}$
\end{minipage}
}
\caption{Hoare rules for Dialectica triples in $\loopd$. Notice that we use some abbreviations: $P\wedge \phi$ is sugar for the formula $P(s,t)\wedge \phi(s)$, and $P\vee_\phi Q$ for the formula $P(s,t)\vee_{\phi(s)} Q(s,t)$.}
\label{fig:HL_loopd}
\end{figure*}

Figure 5 characterises our commands as comprising a standard imperative forward part, with a ``Dialectica version'' of the same component as the backward part. A corresponding set of specification rules is now given in terms of our Dialectica triples in Figure 6: Here we use the more compact and suggestive abbreviation
\[
\totalhoare{P}{C}{Q}:=\hrd{\exists s\forall t\, P(s,t)}{C^+}{C^-}{\exists s\forall t\, Q(s,t)}=\forall s,t\left(P(s,C^-st)\to Q(C^+s,t)\right)
\]
and restrict our attention to Dialectica triples whose pre- and postconditions are formulas of the form $\exists s^S\forall t^T\, P(s,t)$ where $P(s,t)$ is quantifier-free, and $s,t$ are the only free variables of type $S,T$ that appear in them. In the second rule, $\ax$ represents some set of axioms that govern the primitive commands, which we assume are included in $\mathcal{U}$. All of these rules are either instances of the main rules from previous sections, or in the case of conditional, easily derivable, and so it is straightforward to prove the following:
\begin{theorem}
  All rules in Figure 6 are admissible in $\weha+(\loopd)$.  
\end{theorem}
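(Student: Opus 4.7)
The plan is to verify each rule in Figure~\ref{fig:HL_loopd} by unfolding the abbreviation $\totalhoare{P}{C}{Q}$ to the underlying Dialectica triple $\hrd{\exists s\forall t\, P(s,t)}{C^+}{C^-}{\exists s\forall t\, Q(s,t)}$, expanding $C^+$ and $C^-$ according to Figure~\ref{fig:LOOPD}, and then recognising the result either as a direct instance of a $\dhl$ rule from Figure~\ref{fig:dynamics}, or of the while rule from Section~\ref{sec:while}. Admissibility of the former is Theorem~\ref{res:soundness} and of the latter is Theorem~\ref{res:while:sound}, both of which lift to $\weha+(\loopd)$ since the new abstract types $S,T$ are treated just like base types in the underlying Dialectica calculations.

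The first three rules and the composition rule are essentially immediate. The $\loopskip$ rule coincides with the identity axiom $\hrd{P}{\lambda x.x}{\lambda x,v.v}{P}$ of Figure~\ref{fig:dynamics}, since $\loopskip^+=\lambda s.s$ and $\loopskip^-=\lambda s,t.t$. The rule for $\dpair{c}{\gamma}$ unfolds \emph{literally} to the hypothesised axiom $P(s,\gamma st)\to Q(cs,t)$, and the consequence rule is a specialisation of $(cons)$. Sequential composition is a direct instance of $(comp)$, because the Dialectica backward composition $C_1^-\ast_{C_1^+} C_2^-$ featured in Figure~\ref{fig:LOOPD} is by definition the same operator appearing in $(comp)$.

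The conditional is the one case that does not arise from simply reading off a $\dhl$ rule, but it is easily derivable by a case analysis on $\phi(s)$ using the defining axioms of the cases constructor recalled in Section~\ref{sec:prelim:base}. Unfolding yields a goal of shape $P(s,\mathrm{back}(s,t)) \vee_{\phi(s)} Q(s,\mathrm{back}(s,t)) \to R(\mathrm{front}(s), t)$, where $\mathrm{front}$ and $\mathrm{back}$ are the conditional expressions read off from Figure~\ref{fig:LOOPD}. When $\phi(s)$ holds the goal reduces to the $C_1$ hypothesis; when it fails, to the $C_2$ hypothesis. Finally, the while rule is a direct instance of $(W_\prec)$ from Section~\ref{sec:while} with $P_\forall(s):=\forall t\, P(s,t)$, $\tup{a}:=C^+$ and $\tup{\alpha}:=C^-$, since by Figure~\ref{fig:LOOPD} the forward and backward parts of $\loopwhiledo{\prec}{\phi}{C}$ are exactly $\whiledo{\prec}{\phi}{C^+}$ and $\whiledoback{\prec}{\phi}{C^+}{C^-}$, and the descent condition is supplied as a premise.

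No real obstacle is anticipated: the proof amounts to a dictionary translation between the abstract $\dhl$ rules and the procedural language $\loopd$. The only bookkeeping to watch is that pre- and postconditions contain only the canonical variables $s,t$ so that the general rules specialise in the expected way, and that the axioms in $\mathcal{U}$ governing primitive commands and expressions are available in the ambient theory when invoking the $\dpair{c}{\gamma}$ rule. Everywhere else, references to $C^+$ and $C^-$ can be eliminated via the defining axiom $C=\dpair{C^+}{C^-}$.
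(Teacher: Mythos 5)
Your proposal is correct and follows essentially the same route as the paper, which likewise observes that each rule of Figure~6 unfolds to an instance of a rule from Figure~1 (for $\loopskip$, primitive commands, consequence, and sequencing via $(comp)$) or of $(W_\prec)$ for the while loop, with only the conditional requiring a short direct case analysis on $\phi(s)$. In fact the paper gives no more detail than this, so your write-up, including the observation that the $\dpair{c}{\gamma}$ rule unfolds literally to the hypothesised axiom, is if anything slightly more explicit than the original.
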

We also note that Figure 6 only shows one set of rules corresponding to traditional Hoare logic: Plenty more are admissible (e.g. those for conjunction and disjunction).

%%%%%%%%%%%%%%%%%%%%%%%%%%%%%%%%%%%%%%%%%%%%%%%%%%%%%%%%%%%%%%%%%%%%%%%%%%%%%%%%%%%%%%%%%%
\paragraph*{An operational semantics for $\loopd$ via backpropagation}
\label{sec:imperative:operational}
%%%%%%%%%%%%%%%%%%%%%%%%%%%%%%%%%%%%%%%%%%%%%%%%%%%%%%%%%%%%%%%%%%%%%%%%%%%%%%%%%%%%%%%%%%

\begin{figure*}[tp]
\centering
\begin{align*}
\boxed{
\begin{gathered}
\textbf{Forward semantics}
\\[2mm]
%FORWARD SEMANTICS
\forwards{s}{\loopskip}{s}{\epsilon}{\epsilon}
\qquad
\forwards{s}{\dpair{c}{\gamma}}{cs}{[s]}{[\gamma]}
\qquad
\vliinf{}{}
	{\forwards{s}{\seq{C_1}{C_2}}{s''}{\sigma'::\sigma}{\Gamma'::\Gamma}}
	{\forwards{s}{C_1}{s'}{\sigma}{\Gamma}}
	{\forwards{s'}{C_2}{s''}{\sigma'}{\Gamma'}}
\\[2mm]
\vliinf{}{}
	{\forwards{s}{\loopite{\phi}{C_1}{C_2}}{s'}{\sigma}{\Gamma}}
	{\phi(s)}
	{\forwards{s}{C_1}{s'}{\sigma}{\Gamma}}
\qquad
\vliinf{}{}
	{\forwards{s}{\loopite{\phi}{C_1}{C_2}}{s'}{\sigma}{\Gamma}}
	{\neg\phi(s)}
	{\forwards{s}{C_2}{s'}{\sigma}{\Gamma}}
\\[2mm]
\vliiiinf{}{}
	{\forwards{s}{\loopwhiledo{\prec}{\phi}{C}}{s''}{\sigma'::\sigma}{\Gamma'::\Gamma}}
	{\phi(s)}
	{\forwards{s}{C}{s'}{\sigma}{\Gamma}}
	{s'\prec s}
	{\forwards{s'}{\loopwhiledo{\prec}{\phi}{C}}{s''}{\sigma'}{\Gamma'}}
\qquad
\vlinf{}{}
	{\forwards{s}{\loopwhiledo{\prec}{\phi}{C}}{s}{\epsilon}{\epsilon}}
	{\neg\phi(s)}
\\[2mm]
\textbf{Backward semantics}
\\[2mm]
%BACKWARD SEMANTICS
\backward{\sigma}{\Gamma}{t}{\sigma}{\Gamma}{t}
\qquad
\backward{s::\sigma}{\gamma::\Gamma}{t}{\sigma}{\Gamma}{\gamma st}
\qquad
\vliinf{}{}
	{\backward{\sigma}{\Gamma}{t}{\sigma''}{\Gamma''}{t''}}
	{\backward{\sigma}{\Gamma}{t}{\sigma'}{\Gamma'}{t'}}
	{\backward{\sigma'}{\Gamma'}{t'}{\sigma''}{\Gamma''}{t''}}
\end{gathered}
}
\end{align*}
%\label{fig:dynamics}
\caption{Operational semantics for $\loopd$.}
\end{figure*}

So far, $\loopd$ is just an extension of System T, with equational rules that describe the meaning of terms. We now endow terms of $\loopd$ with a big step operational semantics, to highlight how they can be interpreted as ``stateful'' programs. %The general idea is that programs in the language implement a generalised for of backpropagation: With a ``forward pass'' that corresponds to a traditional operational semantics, followed by a ``backward pass'' that handles the backward commands.
We first introduce some notation: For any type $X$ we let $X^\ast$ denote the type of finite sequences of elements of $X$ (while this is not formally part of $\weha$, it can easily be encoded in the system, though we omit details). For $\sigma,\tau:X^\ast$ we write $\tau::\sigma:X^\ast$ for the concatenation of two sequences, and for $x:X$ similarly write $x::\sigma$ for concatenation with a single element. We write $\epsilon$ for the empty list (of any type). Our semantics comprises two components: A forward relation and a backward relation, which we write 
\[
\forwards{s}{C}{s'}{\sigma}{\Gamma} \ \ \ \mbox{and} \ \ \ \backward{\sigma}{\Gamma}{t}{\sigma'}{\Gamma'}{t'}
\]
for $s,s':S$, $t,t':T$, $C\in\comm$, $\sigma,\sigma':S^\ast$ and $\Gamma,\Gamma':(S\to T\to T)^\ast$. Rules for the semantics are given in Figure 7. If we ignore the stacks, our forward semantics is isomorphic to a standard operational semantics for an imperative language. It is not difficult to show that semantics is sound with respect to the denotational semantics of $\weha+(\loopd)$:

\begin{theorem}
\label{res:operational}
Let $C$ be an arbitrary command in $\loopd$, where we restrict the formation of $\loopwhiledo{\prec}{\phi}{C_1}$ to commands such that $\forall s(\phi(s)\to C_1^+s\prec s)$ is provable in $\weha$. Then for any $s$ there exist $s'$, $\sigma$ and $\Gamma$ such that
\[
\forwards{s}{C}{s'}{\sigma}{\Gamma}
\]
where $s'=C^+s$ in $\weha$, and such that for any $t:T$, $\sigma_0$ and $\Gamma_0$ there exists $t'$ such that
\[
\backward{\sigma::\sigma_0}{\Gamma::\Gamma_0}{t}{\sigma_0}{\Gamma_0}{t'}
\]
where $t'=C^-st$ in $\weha$.  
\end{theorem}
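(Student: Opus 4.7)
The plan is to prove both claims simultaneously by structural induction on $C$, with an inner well-founded induction on $\prec$ in the while-loop case. The base cases $C = \loopskip$ and $C = \dpair{c}{\gamma}$ are immediate: the corresponding forward rule in Figure 7 fires once, the equalities $s' = C^+ s$ and $t' = C^- s t$ follow directly from the definitions in Figure 5, and the backward claim is either the reflexive rule (for $\loopskip$) or one instance of the single-element backward rule followed by the reflexive base.

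For the sequential case $\seq{C_1}{C_2}$, I first apply the induction hypothesis to $C_1$ at $s$ to obtain a forward derivation with stacks $\sigma_1,\Gamma_1$ and final state $s_1 = C_1^+ s$, then to $C_2$ at $s_1$ with stacks $\sigma_2,\Gamma_2$ and final state $s_2 = C_2^+ s_1 = (\seq{C_1}{C_2})^+ s$. The sequential forward rule combines these, producing stacks $\sigma_2 :: \sigma_1$, $\Gamma_2 :: \Gamma_1$. For the backward half, I instantiate the backward induction hypothesis for $C_2$ with trailing stacks $\sigma_1 :: \sigma_0$, $\Gamma_1 :: \Gamma_0$ to consume $\sigma_2,\Gamma_2$ and produce $t' = C_2^- s_1 t$; then the backward induction hypothesis for $C_1$ consumes $\sigma_1,\Gamma_1$ from $t'$, producing $C_1^- s (C_2^- s_1 t) = (C_1^- \ast_{C_1^+} C_2^-) s t$ by the definition of $\ast$ from Section~\ref{sec:dynamics}. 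The two backward derivations chain by the transitivity rule. The conditional case is handled by decidable case analysis on $\phi(s)$, invoking the induction hypothesis on whichever branch fires.

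The while case is the main obstacle. Using the formation restriction $\forall s(\phi(s) \to C^+ s \prec s)$, I perform well-founded induction on $\prec$ at $s$, proving the combined forward-plus-backward statement for $\loopwhiledo{\prec}{\phi}{C}$. Case split on $\phi(s)$: if $\neg \phi(s)$, the empty-loop rule supplies the forward derivation, Lemma \ref{res:while:operator} gives $(\loopwhiledo{\prec}{\phi}{C})^+ s = s$ and $(\loopwhiledo{\prec}{\phi}{C})^- s t = t$, and the backward claim reduces to the reflexive rule. If $\phi(s)$, the structural induction hypothesis on $C$ at $s$ yields $\forwards{s}{C}{s'}{\sigma}{\Gamma}$ with $s' = C^+ s$ and $s' \prec s$; the well-founded induction hypothesis at $s'$ then yields an evaluation of the whole while command ending at $s'' = (\loopwhiledo{\prec}{\phi}{C})^+ s'$ with stacks $\sigma',\Gamma'$. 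The loop-step rule produces stacks $\sigma' :: \sigma$, $\Gamma' :: \Gamma$, and the forward unfolding identity of Lemma \ref{res:while:operator} identifies $s''$ with $(\loopwhiledo{\prec}{\phi}{C})^+ s$. For the backward pass, I use the well-founded induction hypothesis at $s'$ with trailing stacks $\sigma :: \sigma_0$, $\Gamma :: \Gamma_0$ to consume $\sigma',\Gamma'$ and produce $t' = (\whiledoback{\prec}{\phi}{C^+}{C^-}) s' t$; then the structural induction hypothesis on $C$ consumes $\sigma,\Gamma$ producing $C^- s t'$, which the backward unfolding identity of Lemma \ref{res:while:operator} identifies with $(\loopwhiledo{\prec}{\phi}{C})^- s t$.

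The delicate point throughout is threading the stacks in the correct order: the sequential and while cases both require instantiating the inductive hypotheses with enlarged trailing $\sigma_0, \Gamma_0$, and the orientation $\sigma' :: \sigma$ rather than $\sigma :: \sigma'$ must be aligned precisely with the definitions of $\ast_{(\_)}$ and of $\whiledoback{\prec}{\phi}{(\_)}{(\_)}$ so that the unfolding identities of Lemma \ref{res:while:operator} apply at each step. Once this bookkeeping is set up correctly, each inductive step is just an equational manipulation using the defining axioms from Figure 5.
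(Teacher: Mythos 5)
Your proposal is correct and follows essentially the same route as the paper's proof: structural induction on $C$ with the sequential composition case as the core (threading the trailing stacks $\sigma_0,\Gamma_0$ through the backward induction hypotheses and chaining via transitivity), and the while case handled by well-founded induction on $\prec$ together with the unfolding identities of Lemma \ref{res:while:operator}, reducing the loop step to the composition pattern. The stack-ordering bookkeeping you highlight is exactly the point the paper's argument also turns on.
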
 
A proof of Theorem \ref{res:operational} can be found in the appendix. Our operational semantics captures the idea that to evaluate a command $C$ in state $s$, we perform a forward run
\[
\forwards{s}{C}{s_k}{[s_{k-1},\ldots,s_1,s]}{[\gamma_k,\ldots,\gamma_1]}
\]
where for each $j=1,\ldots,k$ we have $s_j=c_js_{j-1}$ for $\dpair{c_j}{\gamma_j}$ a primitive command, and $s_k=(c_k\circ\ldots\circ c_1)s=C^+s$. In other words, each time we are confronted with a primitive command $\dpair{c_j}{\gamma_j}$ in some intermediate state $s_{j-1}$, we perform the forward component but push both the intermediate state $s_{j-1}$ and the backward command $\gamma_j$ onto stacks. The backward run then just pops the intermediate states and commands, with
\[
\backward{[s_{k-1},\ldots,s_1,s]}{[\gamma_k,\ldots,\gamma_1]}{t}{\epsilon}{\epsilon}{(\gamma_1s\circ\ldots\circ\gamma_ks_{k-1})t}
\]
where $(\gamma_1s\circ\ldots\circ\gamma_ks_{k-1})t=C^-st$ 
%(in general it should be notes that the decomposition of a command $C$ into primitive components $\dpair{c_j}{\gamma_j}$ is dependent on the input $s$). 
In this way, programs in $\loopd$ combine a generalised backpropagation algorithm, where in addition to composing functions we can also perform conditionals and loops. The fact that Dialectica is tightly related with manipulations of stacks in a ``backward way'' has been discovered in \cite{pedrot:14:functional}, therefore it is not a surprise that stacks also play an essential role in our framework.

\paragraph*{Discussion: On automatic differentiation}

The basic connection between Dialectica and backpropagation is not new: In particular, it was explored in the specific context of differentiation in \cite{kerjean:pedrot:24:delta}. We can phrase it within our framework as well:
Suppose that our state $S$ represent some space on which we can define functions $c:S \to S$ and we have the notion of a differentials $\diff{c}{s}:S\to S$ at points $s:S$. Suppose that we can also express the reverse differentials $\rdiff{c}{s}:T\to T$ on the dual space $T$, defined by $\rdiff{c}{s} t:=t\circ \diff{c}{s}$. Then for a pair $c_1,c_2:S\to S$ of differentiable functions, it is well-known that we have the transpose version of the chain rule:
\[
\rdiff{c_1}{s}\circ\left(\rdiff{c_2}{c_1(s)}\right)
\]
or, to put it another way,
\[
\dpair{c_2\circ c_1}{\rdiff{c_2\circ c_1}{(\cdot)}}=\seq{\dpair{c_1}{\rdiff{c_1}{(\cdot)}}}{\dpair{c_2}{\rdiff{c_2}{(\cdot)}}}.
\]
Therefore, if our primitive commands consist of pairs $\dpair{c}{\gamma}$ with the property that $\gamma s=\rdiff{c}{s}$, this property is preserved under composition of commands, and the resulting semantics (restricted to composition) gives a version of the traditional backpropagation algorithm.

It should be stressed that our approach is not intended to equate the investigations around Dialectica and differentiation as done in \cite{kerjean:pedrot:24:delta}, where among other things a precise translation into the differential $\lambda$-calculus %along with a categorical semantics \davide{not quite! That's what I'm trying to do, in a sense}),
is provided. In a certain sense, our presentation is orthogonal, presenting a general backpropagation procedure connected to imperative commands. 
Actually, for commands of the form $\dpair{c}{\rdiff{c}{(\cdot)}}$ to have any real place within our framework, they have to connect to the overarching logic: For what kind of predicates $P,Q$ on spaces $S$ and their duals $T$ is it naturally the case that
$
\forall s^S,t^{S\to R}\left(P(s,t\circ \diff{c}{s})\to Q(cs,t)\right)
$
holds? This is currently unclear to the authors.

%%%%%%%%%%%%%%%%%%%%%%%%%%%%%%%%%%%%%%%%%%%%%%%%%%%%%%%%%%%%%%%%%%%%%%%%%%%%%%%%%%%%%%%%%%
%%%%%%%%%%%%%%%%%%%%%%%%%%%%%%%%%%%%%%%%%%%%%%%%%%%%%%%%%%%%%%%%%%%%%%%%%%%%%%%%%%%%%%%%%%
\section{Conclusion and Future Work}
\label{sec:conc}
%%%%%%%%%%%%%%%%%%%%%%%%%%%%%%%%%%%%%%%%%%%%%%%%%%%%%%%%%%%%%%%%%%%%%%%%%%%%%%%%%%%%%%%%%%
%%%%%%%%%%%%%%%%%%%%%%%%%%%%%%%%%%%%%%%%%%%%%%%%%%%%%%%%%%%%%%%%%%%%%%%%%%%%%%%%%%%%%%%%%%

\noindent\textbf{Towards a genuine imperative language, concurrency.} 
So far, our presentation of $\loopd$ gives no concrete \emph{data structure} on the states $S,T$ above, and our primitive commands $c,\gamma$ are completely abstract. %We first stress that a standard a data structure for the states, i.e.\ a collection of variable allocations, could be easily encoded within $\weha$.% and given a suitable operational semantics. 
This is the only missing step in order transform $\loopd$ into a full imperative language with backpropagation. This can be achieved, for example, by introducing, as constants in our formal system, a countable collection of state variables $\x_1,\x_2,\ldots$ modeling a memory heap for the state $S$, which would be now seen as a partial function mapping state variables to $\nat$, and similarly we could ask for a further memory heap for the state $T$. In a similar way we could introduce arithmetic expressions, and include an assignment operation in our primitive commands. However, we leave the details to future work.

\begin{figure*}
\fbox{
\begin{minipage}{0.99\textwidth}
\small\centering
$\begin{gathered}
     \vliinf{}{}
	{P_1\ast P_2\, \dpair{\tup{a}}{\tup{\alpha}}\, || \, \dpair{\tup{b}}{\tup{\beta}}\, Q_1\ast Q_2}
	{\hrd{P_1}{\tup{a}}{\tup{\alpha}}{Q_1}}
	{\hrd{P_2}{\tup{b}}{\tup{\beta}}{Q_2}}
    \qquad\qquad
\vlinf{}{}
	{P\ast R\, \dpair{\tup{a}}{\tup{\alpha}}\, || \, \loopskip\, \, Q\ast R}
	{\hrd{P}{\tup{a}}{\tup{\alpha}}{Q}}
\end{gathered}$
\end{minipage}
}
\caption{Frame-like rules admissible in $\dhl$, $P_1\ast P_2\, \dpair{\tup{a}}{\tup{\alpha}}\, || \, \dpair{\tup{b}}{\tup{\beta}}\, Q_1\ast Q_2$ is an abbreviation for $P_1\wedge P_2\, \dpair{\tup{a},\tup{b}}{\tup{\alpha},\tup{\beta}}\, Q_1\wedge Q_2$ \emph{under the condition that $\tup{a},\tup{b}$ and $\tup{\alpha},\tup{\beta}$ operate on different variables}.}
\label{fig:concurrency}
\end{figure*}

Another feature that seems to naturally appear in our framework is \emph{concurrency}, where in particular the rules we give in Figure \ref{fig:concurrency} are admissible in $\dhl$. The interpretation of the conclusion of the left hand rule is
\[
\forall\tup{x},\tup{y},\tup{u},\tup{v}\left(\dt{P_1}{\tup{x}}{\tup{\alpha}\tup{x}\tup{u}}\wedge\dt{P_2}{\tup{y}}{\tup{\beta}\tup{y}\tup{v}}\to \dt{Q_1}{\tup{a}\tup{x}}{\tup{u}}\wedge \dt{Q_2}{\tup{b}\tup{y}}{\tup{v}}\right)
\]
i.e.\ Dialectica realisers act on \emph{independent} variables. Accordingly, we also have the basic variant of the frame rule in the right of Figure \ref{fig:concurrency}. A proper incorporation of concurrency into Dialectica would be extremely interesting, making the kind local reasoning already implicit in Dialectica more formal.
Strictly related with concurrency would be to extend Dialectica to bunched logic \cite{ohearn-pym:99:bunched}. In fact, success here through our approach as Hoare Logic, would lead to further developments in the direction of its extension to separation logic \cite{reynolds:02:separation} (or intermediate logics \cite{ciabattoni:etal:20:concurrent}). The existence of Dialectica interpretations of linear logic \cite{oliva:10:linear:intuitionistic} demonstrate the possibility of handling the computational content of substructural logics with Dialectica.%, though we anticipate that these would present us a very different challenge.
\medskip

%\noindent\textbf{Further structures from programming.} %Having set up a Dialectica-inspired imperative language,
%Strictly related with concurrency would be to extend Dialectica to bunched logic \cite{ohearn-pym:99:bunched}. In fact, success here through our approach as Hoare Logic, would lead to further developments in the direction of separation logic \cite{reynolds:02:separation} (or intermediate logics \cite{ciabattoni:etal:20:concurrent}). The existence of Dialectica interpretations of linear logic \cite{oliva:10:linear:intuitionistic} demonstrate the possibility of handling the computational content of substructural logics with Dialectica, though we anticipate that these would present us a very different challenge.\medskip

\noindent\textbf{Case studies in probability.} Over the last few years, proof mining has been rapidly expanding to probability theory. An entirely different route for expansion is then to investigate Dialectica for probabilistic programs, in order to make formal the extraction processes behind those new works. 
Here we could potentially combine our system for Dialectica with variants of Hoare logic for probabilistic programs,~e.g. \cite{hartog:etal:probabilistic:hoare}.
We anticipate that this is a fertile new territory for interesting algorithms: Here, iterative trial-and-error algorithms seem fundamental, with several different forms of probabilistic convergence represented computationally in terms of learning procedures, which, informally speaking, test elements of a stochastic processes until a region is found which is locally stable with some sufficiently high probability (see \cite{neri-pischke-powell:25:learnability} for a more detailed account of this phenomenon in the specific context of stochastic convergence, and \cite{neri-powell:25:martingale,neri-powell:pp:rs} for examples of recent, relevant case studies, which in turn build on earlier work in ergodic theory \cite{avigad-gerhardy-towsner:10:local}). We note that even elementary facts from probability have resulted in algorithms of extreme complexity, such as the analysis of Egorov's theorem in \cite{avigad-dean-rute:12:dominated}, and propose that the procedural paradigm explored in this paper might be well suited to describing and simplifying such algorithms.\medskip

\noindent\textbf{Operational semantics of Dialectica.} Finally, %we propose that 
it would be interesting to try to extend the operational semantics of $\loopd$ terms to the more general class of realisers handled by the full system $\dhl$. Here, of course, we have to contend with full functional programming, but we conjecture that e.g.\ through a sensible use of monads one could potentially characterise general Dialectica realisers more dynamically, by describing in more detail how the forward and backward directions interact.
This perspective brings us closer to machine-based interpretations of proofs such as classical realizability \cite{krivine:09:realizability}, whose applicability to witness extraction from classical proofs as discussed in \cite{miquel:11:realizability} is undoubtedly relevant, given the association with negative translations. 
The starting point would surely be the 
%As 
already mentioned %in Section~\ref{sec:imperative:operational}, a fundamental connection between Dialectica and Krivine's abstract machine is given in 
\cite{pedrot:14:functional}, where a fundamental connection between Dialectica and the KAM is given.%, and it is possible that the operational side of Dialectica might represent a bridge which could connect communities who view the programs-from-proofs paradigm in different ways.

\bibliography{biblio,tpbiblio}

\section{Appendix}

All proofs contained in the appendix are routine, involving a standard structural induction with multiple cases.

\begin{proof}[Proof of Theorem~\ref{res:soundness}] The proof follows the standard soundness theorem for the Dialectica, so we only give details of representative and interesting cases.
\begin{itemize}

\item The axioms are admissible by definition. 

\item For $(p\wedge_R)$, the premise is
\[
\forall \tup{x},\tup{v},\tup{w}\left(\dt{P}{\tup{x}}{\tup{\alpha}\tup{x}\tup{v}\tup{w}}\to \dt{Q}{\tup{a}\tup{x}}{\tup{v}}\wedge \dt{R}{\tup{b}\tup{y}}{\tup{w}}\right)
\]
and therefore we have
\[
\forall \tup{x},\tup{w},\tup{w}\left(\dt{P}{\tup{x}}{\tilde{\tup{\alpha}}\tup{x}\tup{w}\tup{v}}\to \dt{R}{\tup{b}\tup{y}}{\tup{w}}\wedge \dt{Q}{\tup{a}\tup{x}}{\tup{v}}\right)
\]
for $\tilde{\tup{\alpha}}\tup{x}\tup{w}\tup{v}:=\tup{\alpha}\tup{x}\tup{v}\tup{w}$. All other basic actions are proved similarly.

\item $(cond_R)$ is the more interesting of the conditionals. Here, we have
\[
\forall\tup{x},\tup{v}\left(\dt{P}{\tup{x}}{\tup{\alpha}\tup{x}\tup{v}}\to \dt{Q}{\tup{a}\tup{x}}{\tup{v}}\right)
\]
and 
\[
\forall\tup{x},\tup{w}\left(\dt{P}{\tup{x}}{\tup{\beta}\tup{x}\tup{w}}\to \dt{R}{\tup{b}\tup{x}}{\tup{w}}\right).
\]
Define $\tup{\gamma}:=\lambda\tup{x},\tup{v},\tup{w}.\ite{\dt{P}{\tup{x}}{\tup{\alpha}\tup{x}\tup{v}}}{\tup{\beta}\tup{x}\tup{w}}{\tup{\alpha}\tup{x}\tup{v}}$. We need to prove that
\[
\forall\tup{x},\tup{v},\tup{w}\left(\dt{P}{\tup{x}}{\tup{\gamma}\tup{x}\tup{v}\tup{w}}\to \dt{Q}{\tup{a}\tup{x}}{\tup{v}}\wedge \dt{R}{\tup{b}\tup{x}}{\tup{w}}\right)
\]
There are two possibilities: Fixing $\tup{x},\tup{v},\tup{w}$, either $\dt{P}{\tup{x}}{\tup{\alpha}\tup{x}\tup{v}}$ holds, in which case 
\[
\dt{P}{\tup{x}}{\tup{\gamma}\tup{x}\tup{v}\tup{w}}\to \dt{P}{\tup{x}}{\tup{\beta}\tup{x}\tup{w}}\to \dt{P}{\tup{x}}{\tup{\alpha}\tup{x}\tup{v}}\wedge \dt{P}{\tup{x}}{\tup{\beta}\tup{x}\tup{w}}
\]
and so the conclusion is true, or $\neg \dt{P}{\tup{x}}{\tup{\alpha}\tup{x}\tup{v}}$, and then $\dt{P}{\tup{x}}{\tup{\gamma}\tup{x}\tup{v}\tup{w}}\to \dt{P}{\tup{x}}{\tup{\alpha}\tup{x}\tup{v}}\to \bot$, and so the result automatically follows by ex-falso-quodlibet.

\item $(imp)$ and $(exp)$ are immediate.

\item $(comp)$ is standard and fundamental to Dialectica: If
\[
\forall\tup{x},\tup{v}\left(\dt{P}{\tup{x}}{\tup{\alpha}\tup{x}\tup{v}}\to \dt{Q}{\tup{a}\tup{x}}{\tup{v}}\right)
\]
and 
\[
\forall\tup{u},\tup{w}\left(\dt{Q}{\tup{u}}{\tup{\beta}\tup{u}\tup{w}}\to \dt{R}{\tup{b}\tup{u}}{\tup{w}}\right)
\]
then fixing $\tup{x},\tup{w}$, setting $\tup{u}:=\tup{a}\tup{x}$ and $\tup{v}:=\tup{\beta}\tup(\tup{a}\tup{x})\tup{w}$ gives the desired result.

\item The first four quantifier rules are also standard, though a little care is needed if we allow them to apply to tuples. The most involved in $(\forall_R)$. If
\[
\forall\tup{y},\tup{v}\left(\dt{P}{\tup{y}}{\tup{\alpha}\tup{y}\tup{v}}\to \dt{Q(\tup{x})}{\tup{a}\tup{y}}{\tup{v}}\right)
\]
then
\[
\forall\tup{y},\tup{v}\left(\dt{P}{\tup{y}}{\tup{\alpha}\tup{y}\tup{v}}\to \dt{Q(x_1,\ldots,x_n)}{(\lambda x_n.\tup{a}\tup{y})x_n}{\tup{v}}\right)
\]
which is just
\[
\forall\tup{y},\tup{v}\left(\dt{P}{\tup{y}}{\tup{\alpha}\tup{y}\tup{v}}\to \dt{\forall x_n\, Q(x_1,\ldots,x_n)}{\lambda x_n.\tup{a}\tup{y}}{x_n,\tup{v}}\right)
\]
and continuing for the rest of the tuple we obtain
\[
\forall\tup{y},\tup{v}\left(\dt{P}{\tup{y}}{\tup{\alpha}\tup{y}\tup{v}}\to \dt{\forall \tup{x}\, Q(\tup{x})}{\lambda \tup{x}.\tup{a}\tup{y}}{\tup{x},\tup{v}}\right)
\]
which is just
\[
\forall\tup{y},\tup{v}\left(\dt{P}{\tup{y}}{(\lambda\tup{y},\tup{x}.\tup{\alpha}\tup{y})\tup{y}\tup{x}\tup{v}}\to \dt{\forall \tup{x}\, Q(\tup{x})}{(\lambda \tup{y},\tup{x}.\tup{a}\tup{y})\tup{y}}{\tup{x},\tup{v}}\right)
\]
where here we note that the condition $\tup{x}$ not free in $P$ ensures that the free variables of $\dt{P\to \forall\tup{x}\, Q(\tup{x})}{\tup{f},\tup{F}}{\tup{y},\tup{x},\tup{v}}$ are the same as those of $P\to \forall\tup{x}\, Q(\tup{x})$.

\item $(s_L)$ and $(s_R)$ follow in a straightforward way from the usual quantifier rules, and the conclusion and premise is identical for $(\epsilon_R)$ and $(\epsilon_L)$.

\item $(cons)$ is immediate, and $(exp)$ follows from the rule of extensionality in $\weha$. 

\item As usual, $(ind)$ is proven by induction. We have
\[
\forall\tup{y},\tup{v}\left(\dt{P(x)}{\tup{y}}{\tup{\alpha}(x)\tup{y}\tup{v}}\to \dt{P(x+1)}{\tup{a}(x)\tup{y}}{\tup{v}}\right)
\]
for all $x:\nat$, so fixing $\tup{y}$, $\tup{v}$ and defining $\tup{b}:=\rec\,\tup{a}\tup{y}$ and $\tup{\beta}$ as in Section \ref{sec:dynamics}, we prove by induction that
\[
\dt{P(x-z)}{\tup{b}(x-z)}{\tup{\beta} z}\to \dt{P(x)}{\tup{b}x}{\tup{v}}
\]
for $z=0,\ldots,x$. The base case is immediate, and for the induction step we use that $\dt{P(x-z-1)}{\tup{b}(x-z-1)}{\tup{\beta}(z+1)}$ is equivalent to
\[
\dt{P(x-z-1)}{\tup{b}(x-z-1)}{\tup{\alpha}(x-z-1)(\tup{b}(x-z-1))(\tup{\beta}z)}
\]
which by the premise of the rule allows us to obtain
\[
\dt{P(x-z)}{\tup{a}(x-z-1)(\tup{b}(x-z-1))}{\tup{\beta}z}
\]
which is just $\dt{P(x-z)}{\tup{b}(x-z)}{\tup{\beta}z}$. Thus we can apply the induction hypothesis. For $x:=z$ we then have
\[
\dt{P(0)}{\tup{y}}{\tup{\beta} x}\to \dt{P(x)}{\tup{b}x}{\tup{v}}
\]
and the result follows by definition.
\end{itemize}
The remaining cases are straightforward.
\end{proof}

\begin{figure*}
\fbox{
\begin{minipage}{0.94\textwidth}
\scriptsize
$\begin{gathered}
\textbf{Propositional rules: Axioms, basic actions, conditionals, switching, composition.}
\\[2mm]
%AXIOMS
\hrd{\bot}{\tup{a}}{-}{P}
\quad
\hrd{P_\exists}{-}{\tup{\alpha}}{\top}
\quad
\hrd{P_\exists}{\lambda\tup{x}.\tup{x}}{-}{P_\exists}
\quad
\vlinf{}{}
	{\hrd{P_\exists}{-}{-}{Q_\forall}}
	{P_\exists\to Q_\forall\in\ax}
\qquad
\vlinf{}{}
	{\hrd{P'_\exists}{-}{-}{Q'_\forall}}
	{\hrd{P_\exists}{-}{-}{Q_\forall}}
\text{\,  for $\frac{P_\exists\to Q_\forall}{P'_\exists\to Q'_\forall}\in\rul$}
\\[2mm]
%PERMUTATIONS
\vlinf{}{p\wedge_R}
	{\hrd{P_\exists}{\tup{b},\tup{a}}{-}{R\wedge Q}}
	{\hrd{P_\exists}{\tup{a},\tup{b}}{-}{Q\wedge R}}
\qquad
\vlinf{}{p\wedge_L}
	{\hrd{Q_\exists\wedge P_\exists}{\tilde{\tup{a}}}{-}{R}}
	{\hrd{P_\exists\wedge Q_\exists}{\tup{a}}{-}{R}}
\qquad
\vlinf{}{p\vee_R}
	{\hrd{P_\exists}{\tup{b},\tup{a}}{-}{R\vee_{\bar{c}} Q}}
	{\hrd{P_\exists}{\tup{a},\tup{b}}{-}{Q\vee_c R}}
\qquad
\vlinf{}{p\vee_L}
	{\hrd{Q_\exists\vee_{\bar{c}} P_\exists}{\tilde{\tup{a}}}{-}{R}}
	{\hrd{P_\exists\vee_c Q_\exists}{\tup{a}}{-}{R}}
\\[2mm]
%WEAKENINGS
\vlinf{}{\vee_R}
	{\hrd{P_\exists}{\tup{a},\tup{b}}{-}{Q\vee_0 R}}
	{\hrd{P_\exists}{\tup{a}}{-}{Q}}
\qquad
\vlinf{}{\wedge_L}
	{\hrd{P_\exists\wedge R_\exists}{\tup{a}_\pi}{-}{Q}}
	{\hrd{P_\exists}{\tup{a}}{-}{Q}}
\qquad
\vlinf{}{\wedge_R}
	{\hrd{P_\exists}{\tup{a}}{-}{Q}}
	{\hrd{P_\exists}{\tup{a},\tup{b}}{-}{Q\wedge R}}
\qquad
\vlinf{}{\vee_L}
	{\hrd{P_\exists}{\tup{a}_p}{-}{Q}}
	{\hrd{P_\exists\vee_0 R_\exists}{\tup{a}}{-}{Q}}
\\[2mm]
%CONDITIONAL
\vliinf{}{cond_L}
	{\hrd{P_\exists\vee_\phi Q_\exists}{\lambda\tup{x},\tup{y}.\ite{\phi}{\tup{a}\tup{x}}{\tup{b}\tup{y}}}{-}{R}}
	{\hrd{P_\exists\wedge\phi}{\tup{a}}{-}{R}}
	{\hrd{Q_\exists\wedge\neg\phi}{\tup{b}}{-}{R}}
\qquad
\vliinf{}{cond_R}
	{\hrd{P_\exists}{\tup{a},\tup{b}}{-}{Q\wedge R}}
	{\hrd{P_\exists}{\tup{a}}{-}{Q}}
	{\hrd{P_\exists}{\tup{b}}{-}{R}}
\\[2mm]
%IMPORTATION, EXPORTATION, COMPOSITION
\vlinf{}{imp}
	{\hrd{P_\exists\wedge Q_\exists}{\tup{a}}{-}{R}}
	{\hrd{P_\exists}{\tup{a}}{-}{Q_\exists\to R}}
\qquad
\vlinf{}{exp}
	{\hrd{P_\exists}{\tup{a}}{-}{Q_\exists\to R}}
	{\hrd{P_\exists\wedge Q_\exists}{\tup{a}}{-}{R}}
\qquad
\vliinf{}{comp}
	{\hrd{P_\exists}{\tup{b}\circ\tup{a}}{-}{R}}
	{\hrd{P_\exists}{\tup{a}}{-}{Q_\exists}}
	{\hrd{Q_\exists}{\tup{b}}{-}{R}}
\qquad
\\[2mm]
\textbf{Quantifier rules: Term introduction, $\lambda$-abstraction and application, epsilon terms.}
\\[2mm]
%NORMAL QUANTIFIERS
\vlinf{}{\exists_R}
	{\hrd{P_\exists}{\lambda\_.t,\tup{a}}{-}{\exists \tup{x}\, Q(\tup{x})}}
	{\hrd{P_\exists}{\tup{a}}{-}{Q(\tup{t})}}
\\[2mm]
\vlinf{}{\exists_L}
	{\hrd{\exists \tup{x}\, P_\exists(\tup{x})}{\lambda \tup{x}.\tup{a}}{-}{Q}}
	{\hrd{P_\exists(\tup{x})}{\tup{a}}{-}{Q}}
\qquad
\vlinf{}{\forall_R}
	{\hrd{P_\exists}{\lambda\tup{y},\tup{x}.\tup{a}\tup{y}}{-}{\forall\tup{x}\, Q(\tup{x})}}
	{\hrd{P_\exists}{\tup{a}}{-}{Q(\tup{x})}}
\qquad
\text{$\tup{x}$ not free in $Q$ resp. $P$ for $\exists_L$ resp. $\forall_R$}
\\[2mm]
\vlinf{}{s_L}
	{\hrd{P_\exists(\tup{t})}{\tup{a}\tup{t}}{-}{Q}}
	{\exists \tup{x}\, \hrd{P_\exists(\tup{x})}{\tup{a}}{-}{Q}}
\qquad
\vlinf{}{s_R}
	{\hrd{P_\exists}{\lambda\tup{y}.\tup{a}\tup{y}\tup{t}}{-}{Q(\tup{t})}}
	{\hrd{P_\exists}{\tup{a}}{-}{\forall \tup{x}\, Q(\tup{x})}}
\qquad
\vlinf{}{\epsilon_R}
	{\hrd{P_{qf}}{\tup{b}}{-}{Q(\tup{a})}}
	{\hrd{P_{qf}}{\tup{a},\tup{b}}{-}{\exists \tup{x}\, Q(\tup{x})}}
\\[2mm]
\textbf{Consequence, extensionality, and induction/recursion}
\\[2mm]
%OTHER RULES
\vliiinf{}{cons}
	{\hrd{P'_\exists}{\tup{a}}{-}{Q'}}
	{P'_\exists\to_D P_\exists}
	{\hrd{P_\exists}{\tup{a}}{-}{Q}}
	{Q\to_D Q'}
\qquad
\vliinf{}{ext}
	{\hrd{P_\exists}{\tup{b}}{\tup{\beta}}{Q}}
	{\hrd{P_\exists}{\tup{a}}{\tup{\alpha}}{Q}}
	{\tup{a},\tup{\alpha}=\tup{b},\tup{\beta}}
\qquad
\vlinf{}{\!ind}
	{\hrd{P_\exists(0)}{\rec\, {\tup{a}}}{-}{\forall x\, P_\exists(x)}}
	{\hrd{P_\exists(x)}{\tup{a}x}{-}{P_\exists(x+1)}}
\end{gathered}$
\end{minipage}
}
\caption{Simplified rules for Dialectica triples with empty backward realiser.}
\label{fig:simp}
\end{figure*}

\begin{proof}[Proof of Theorem \ref{res:equivalence}]
We refer to the axiomatisation given in \cite[Section 3]{kohlenbach:08:book}. We first note that, in our system, provability of $\top\vdash P\to Q$ is equivalent to provability of $P\vdash Q$. With that in mind, for the axioms of intuitionistic logic, both contraction axioms follow from the conditional rules, while weakening, permutation, and ex falso quodlibet are clearly derivable. The quantifier axioms follow from $(s_R)$ and $(s_L)$. Both modus ponens and syllogism are instances of $(comp)$, where for the former we note that if $\top\vdash P$ and $\top\vdash P\to Q$, then also $P\vdash Q$, and thus $\top\vdash Q$. Exportation and importation are identical in both systems, while expansion is provable using a combination of the rules for $\vee$. The quantifier rules are just $(\exists_L)$ and $(\forall_R)$. For arithmetic: We assume that the axioms and rules for equality and System T are included in $\ax$, and the quantifier-free rules of extensionality included in $\rul$, so all of these are then provable in our system. Replacing the induction axioms with the equivalent rule, it is not hard to show that the latter is derivable from $(ind)$.
\end{proof}

\begin{proof}[Proof of Theorem \ref{res:while:sound}]
Suppose that the premises of the rule hold, and so in particular by the left hand premise
\[
\forall\tup{x},\tup{v}\left(\dt{P_\forall(\tup{x})}{}{\tup{\alpha}\tup{x}\tup{v}}\wedge\phi(\tup{x})\to \dt{P_\forall(\tup{a}\tup{x})}{}{\tup{v}}\right)
\]
Writing $\tup{b}:=\whiledo{\prec}{\phi}{\tup{a}}$ and $\tup{\beta}:=\whiledoback{\prec}{\phi}{\tup{a}}{\tup{\alpha}}$, we are done if we can prove $\forall\tup{x}\, A(\tup{x})$ for
\[
A(\tup{x}):=\forall\tup{v}\left(\dt{P_\forall(\tup{x})}{}{\tup{\beta}\tup{x}\tup{v}}\to \dt{P_\forall(\tup{b}\tup{x})}{}{\tup{v}}\wedge\neg\phi(\tup{b}\tup{x})\right)
\]
We do this using the wellfounded induction rule $\ind{\prec}$. So fixing $\tup{x}$, we assume that $A(\tup{y})$ holds for all $\tup{y}\prec\tup{x}$. To prove $A(\tup{x})$, we use Lemma \ref{res:while:operator}, which is applicable for any $\tup{x}$ thanks to the right hand premise of $(W_\prec)$. There are two cases to consider. If $\neg\phi(\tup{x})$ then $\tup{b}\tup{x}=\tup{x}$ and $\tup{\beta}\tup{x}=\lambda\tup{v}.\tup{v}$, and then $A(\tup{x})$ becomes
\[
\forall\tup{v}\left(\dt{P_\forall(\tup{x})}{}{\tup{v}}\to \dt{P_\forall(\tup{x})}{}{\tup{v}}\wedge\neg\phi(\tup{x})\right)
\]
which is provable in this case. On the other hand, if $\phi(\tup{x})$ then we have $\tup{b}\tup{x}=\tup{b}(\tup{a}\tup{x})$, $\tup{\beta}\tup{x}=\lambda\tup{v}.\tup{\alpha}\tup{x}(\tup{\beta}(\tup{a}\tup{x})\tup{v})$ and so $A(\tup{x})$ becomes
\[
\forall\tup{v}\left(\dt{P_\forall(\tup{x})}{}{\tup{\alpha}\tup{x}(\tup{\beta}(\tup{a}\tup{x})\tup{v})}\to \dt{P_\forall(\tup{b}(\tup{a}\tup{x}))}{}{\tup{v}}\wedge\neg\phi(\tup{b}(\tup{a}\tup{x}))\right)
\]
But from $\phi(\tup{x})$ we also obtain $\tup{a}\tup{x}\prec\tup{x}$, and so by the induction hypothesis we can assume
\[
\forall\tup{v}\left(\dt{P_\forall(\tup{a}\tup{x})}{}{\tup{\beta}(\tup{a}\tup{x})\tup{v}}\to \dt{P_\forall(\tup{b}(\tup{a}\tup{x}))}{}{\tup{v}}\wedge\neg\phi(\tup{b}(\tup{a}\tup{x}))\right)
\]
It suffices therefore to show that
\[
\forall\tup{v}\left(\dt{P_\forall(\tup{x})}{}{\tup{\alpha}\tup{x}(\tup{\beta}(\tup{a}\tup{x})\tup{v})}\to \dt{P_\forall(\tup{a}\tup{x})}{}{\tup{\beta}(\tup{a}\tup{x})\tup{v}}\right)
\]
and this is immediate from the left hand premise of $(W_\prec)$ and the fact that $\phi(\tup{x})$ holds, and so we have completed the induction step and therefore the proof.
\end{proof}

\begin{proof}[Proof of Theorem \ref{res:operational}]
For $\loopskip$ and the primitive commands this is immediate. The core of the proof lies in the composition rule. Here, for any $s$, by the induction hypothesis there exist $s',\sigma',\Gamma'$ and $s'',\sigma',\Gamma'$ such that
\[
\begin{cases}
\forwards{s}{C_1}{s'}{\sigma}{\Gamma} \\
\forwards{s'}{C_2}{s''}{\sigma'}{\Gamma'}
\end{cases}
\]
and therefore
\[
\forwards{s}{\seq{C_1}{C_2}}{s''}{\sigma'::\sigma}{\Gamma'::\Gamma}
\]
for $s'=C_1^+s$ and $s''=C_2^+s'=(C_2^+\circ C^+_1)s=(\seq{C_1}{C_2})^+s$. For the backward direction, again using the induction hypothesis, for any $t,\sigma_0$ and $\Gamma_0$ we have
\[
\begin{cases}
\backward{\sigma'::\sigma::\sigma_0}{\Gamma'::\Gamma::\Gamma_0}{t}{\sigma::\sigma_0}{\Gamma::\Gamma_0}{t'}\\
\backward{\sigma::\sigma_0}{\Gamma::\Gamma_0}{t'}{\sigma_0}{\Gamma_0}{t''}
\end{cases}
\]
and therefore
\[
\backward{\sigma'::\sigma::\sigma_0}{\Gamma'::\Gamma::\Gamma_0}{t}{\sigma_0}{\Gamma_0}{t''}
\]
for $t'=C_2^-s't$ and $t''=C_1^-st'=C_1^-s(C_2^-s't)=C_1^-s(C_2^-(C_1^+s)t)=(\seq{C_1}{C_2})^-st$. The conditionals are straightforward, since whenever $\phi(s)$ then in $\weha$ we have
\begin{align*}
(\loopite{\phi}{C_1}{C_2})^+s&=C_1^+s \\
(\loopite{\phi}{C_1}{C_2})^-st&=C_1^-st
\end{align*}
and similarly for $\neg\phi(s)$. Finally, for the while loop we use induction on $\prec$, using that if $\phi(s)$ then $(\loopwhiledo{\prec}{\phi}{C})s=(\seq{C}{\loopwhiledo{\prec}{\phi}{C}})$ (and also $(\loopwhiledo{\prec}{\phi}{C})s=(\loopskip)s$ if $\neg\phi(s)$), and so the induction step is essentially the same as the composition rule.
\end{proof}

\end{document}